\newcommand{\OPT}{\operatorname{\text{\textsc{opt}}}}
\newcommand{\PROFIT}{\operatorname{\text{\textsc{profit}}}}
\newcommand{\AREA}{\operatorname{\text{\textsc{area}}}}
\newcommand{\WEIGHT}{\operatorname{\text{\textsc{weight}}}}
\newcommand{\SOL}{\operatorname{\text{\textsc{sol}}}}
\let\doendproof\endproof
\renewcommand\endproof{~\hfill\qed\doendproof}
\begin{document}

\title{Robust Online Algorithms for Dynamic Choosing Problems}
%\title{Contribution Title\thanks{Supported by organization x.}}
%
%\titlerunning{Abbreviated paper title}
% If the paper title is too long for the running head, you can set
% an abbreviated paper title here
%
\author{Sebastian Berndt\inst{1}, Kilian Grage\inst{2}\thanks{Supported by DFG-Project JA 612 /19-1 and GIF-Project "Polynomial Migration for Online Scheduling"}, Klaus Jansen\inst{2},\\ Lukas Johannsen\inst{2}, Maria Kosche\inst{3}}
\authorrunning{S. Berndt et al.}
% First names are abbreviated in the running head.
% If there are more than two authors, 'et al.' is used.
%
\institute{
University of L\"ubeck, 23562  L\"ubeck, Germany
\\\email{s.berndt@uni-luebeck.de}
\and
Kiel University, 24118 Kiel, Germany\\
\email{\{kig,kj\}@informatik.uni-kiel.de}
\and
G\"ottingen University, 37073 G\"ottingen, Germany
\email{maria.kosche@cs.uni-goettingen.de}}
\maketitle              % typeset the header of the contribution
\begin{abstract}
  Semi-online algorithms that are allowed to perform a bounded amount of repacking 
  achieve guaranteed good worst-case behaviour in a more realistic setting. 
  Most of the previous works focused on minimization problems that aim to
  minimize some costs.  In this work, we study maximization problems that aim to maximize their
  profit.
	
	We mostly focus on a class of problems that we call \emph{choosing problems}, where a maximum profit subset of a set objects has to be maintained. Many known problems, such as \textsc{Knapsack}, \textsc{MaximumIndependentSet} and variations of these, are part of this class. We present a framework for choosing problems that allows us to transfer offline $\alpha$-approximation
  algorithms into $(\alpha-\epsilon)$-competitive semi-online algorithms with amortized migration $O(1/\epsilon)$. Moreover we complement these positive results with lower bounds that show that our results are tight in the sense that no amortized migration of $o(1/\epsilon)$ is possible. 
 % We mostly focus on two classes of such problems: \emph{choosing problems}
  %where a subset of objects is chosen and \emph{geometric problems}, where the
  %instances are described by geometric objects. 
  %We present a framework that allows us to transfer offline $\alpha$-approximation
  %algorithms into $(\alpha-\epsilon)$-competitive semi-online algorithms with
  %amortized migration $O(1/\epsilon)$.
  %Our framework captures important problems such as the (multi-dimensional)
  %\textsc{Knapsack} problem and geometric variants of \textsc{MaximumIndependentSet}.
  %Moreover, we present lower bounds that show that our results are tight in the
  %sense that no amortized migration of $o(1/\epsilon)$ is possible. 

\keywords{online algorithms  \and dynamic algorithms \and competitive ratio \and migration \and Knapsack \and Maximum Independent Set.}
\end{abstract}

%%%%%%%%%%%%%%%%%%%%%%%%%%%%%%%%%%%%%%%%%%%%%%%%%%%%%%%%%%%%%%%%%%%%%%%%
\section{Introduction} 
%%%%%%%%%%%%%%%%%%%%%%%%%%%%%%%%%%%%%%%%%%%%%%%%%%%%%%%%%%%%%%%%%%%%%%%%

Optimization problems and how fast we can solve them optimally or
approximatively have been a central topic in theoretical computer science. 
These kind of problems usually have their origins in the real world and solving
them in most cases is not only relevant from a theoretical perspective but also
has many applications. 
These optimization problems do not account for one major
problem that is unique to applications: 
an unknown future. 
Usually,  we are not given all
the information in advance, as unforeseeable things like customers cancelling or
new urgent customer requests can happen at any moment. 
This context gave rise to the study of \emph{online problems} in different
variants to model this uncertainty. 
The classical model starts with an empty instance and in subsequent time
steps, new parts of the instance are added. 
In order to solve a problem, an online algorithm must generate a solution for
every time step without knowing any information about future events. \looseness=-1

In the strictest setting, the algorithm is not allowed to alter the solution
generated in a previous step at all, so every mistake will carry  weight into the
future. 
As this is a very heavy restriction for online algorithms, there are also
variants where the algorithm is allowed to change solutions to some degree. 
We cannot allow for an arbitrary number of changes as this only leads to the
offline setting. 
We therefore consider the \emph{migration model}, where every change in the
instance, e.\,g.~an added node to a graph or a new item, comes
with a \emph{migration potential}. 
Intuitively, this migration potential is linked to some size or weight which
means objects that have a larger impact on the optimization criteria
will yield larger migration potential allowing more change. 
Similarly, small objects will only allow for small changes of the
solution. 
The ratio between the sum of changed objects in the solution and the total
migration potential at a time is the so called \textit{migration factor}. \looseness=-1

We  consider this migration setting in an amortized way by allowing the migration to accumulate
over time. 
This way we allow our algorithm to generally handle newly arriving objects without
changing the solution (apart from extending the solution with regards to the new
objects) and at some later point of time we will repack the solution and use the
sum of all migration potential of items that arrived up to that time. 
On the matter of (amortized) migration there are two criteria to consider. 
For one, as usual in optimization theory, we want to achieve good competitive
ratios, meaning a solution close to an optimal offline solution. 
On the other hand, we would like that the  migration also remains small. 
In general this is quite an intricate question as both these factors counteract
each other. 
The better the solution we generate, the more we need to repack and vice versa. 
Despite this duality, we will present in this paper a very simple framework that
achieves results close to the best offline results for a large range of
problems. 
In fact, we manage to keep solutions on par with the best offline algorithms
except an additive $\epsilon$-term. 
Surprisingly, we only need an amortized migration factor of
$O(1/\epsilon)$ despite maintaining such a high quality solution. 
For many problems this framework even works when considering the problem
variants where objects not only appear but are also removed from the instance.
In addition to these positive results, we also show that a migration of only
$\Omega(1/\epsilon)$ is needed even for relatively simple problems such as the 
\textsc{SubsetSum} problem.

%%%%%%%%%%%%%%%%%%%%%%%%%%%%%%%%%%%%%%%%%%%%%%%%%%%%%%%%%%%%%%%%%%%%%%%%
\section{Preliminaries}
%%%%%%%%%%%%%%%%%%%%%%%%%%%%%%%%%%%%%%%%%%%%%%%%%%%%%%%%%%%%%%%%%%%%%%%%

We are given some optimization problem $\Pi_{\textrm{off}}$ consisting of a set
of objects and consider the online version $\Pi_{\textrm{on}}$, where these
objects arrive one by one over time (the \emph{static} case).
If arrived items can also be removed from the instance, we call this the
\emph{dynamic} case. 
For an instance $I\in \Pi_{\textrm{on}}$ and some time $t$ we denote the
instance at time $t$ containing the first $t$ objects by $I_t$.
As discussed above, we allow a certain amount of repacking and thus, every
object has an associated \emph{migration potential}, which typically corresponds
to its size or its weight. \looseness=-1

The item arriving or departing at time $t$ has migration potential
$\Delta(I_{t})$.
Generally in the following, we will write $\Delta(I_{t}\to I_{t'})$ to denote the
migration potential that we received starting at time $t$ until $t'$ for $t\le
t'$. If the given instance is clear from the context, we will simplify this
notation and write $\Delta_{t:t'} := \Delta(I_{t}\to I_{t'})$. The total
migration potential up to some time $t$ is thus given by $\Delta_{0:t}$.

We further assume that for every two feasible solutions $S_t$ and $S'_{t'}$ at times $t,t'$,
we also have a necessary migration cost
that we denote by $\phi(S_t\to S'_{t'})$, respectively.
This resembles the costs to migrate from solution $S_{t}$ to solution $S'_{t}$.
We assume that these costs can be computed easily by comparing the respective
solutions. Note that the initial assignment of an item does not cost any migration.
Hence, we often write $\phi(S_{t}\to S_{t+1})$ to denote the migration cost of changing solution
$S_{t}$ to solution $S_{t+1}$ and assume that the newly arrived item is now also
present in~$S_{t+1}$.

We say that an online algorithm has \emph{amortized migration factor} $\gamma$
if, for all time steps $t$, the sum of the migration costs is at most
$\gamma\cdot \sum_{i=1}^{t}\Delta(I_{t}) = \gamma\cdot \Delta_{0:t}$, i.\,e.
  $\sum_{i=1}^{t}\phi(S_{i-1}\to S_{i}) \leq \gamma\cdot \Delta_{0:t}$,
where $S_{i}$ are the solutions produced by the algorithm with $S_0$ being the
empty solution. 
We will sometimes make use of the amortized migration factor inside a time interval $t \to \cdots \to t'$, which we will denote by $\gamma_{t:t'}$.
The notions of \emph{migration} or \emph{migration factor} are interchangeably
used by us and always describe the amortized migration factor.

In this work, we only consider maximization problems.
Hence, every solution $S_{t}$ of an instance $I_{t}$ has some profit
$\PROFIT(S_{t})$ and $\OPT(I_{t})$ denotes the optimal profit of any solution to
$I_{t}$.
An algorithm that achieves competitive ratio $1-\epsilon$ and has migration
$f(1/\epsilon)$ for some function $f$ is called
\emph{robust}~\cite{DBLP:journals/mor/SandersSS09}. 
Due to page restrictions, some of our results can be found in the appendix.

\subsection{\textsc{Knapsack}-type problems}

The \textsc{Knapsack} problem is one of the classical maximization problems.
In its most basic form,
it considers a \emph{capacity} $C\in \mathbb{N}$
and a finite set $I$ of \emph{items},
each of which is assigned a \emph{weight} $w_i \in \mathbb N$ and a \emph{profit} $p_i \in \mathbb N$.
The objective is to find a subset $S \subseteq I$,
interpreted as a packing of the figurative knapsack,
with maximum profit $\PROFIT(S) = \sum_{i \in S} p_i$ while the total weight $\WEIGHT(S) = \sum_{i \in S} w_i$ does not exceed $C$.
The special case in which all weights are equal to their respective profits is the \textsc{SubsetSum} problem.
In this case, because weight and profit coincide,
we will simply call both the \emph{size} of an item.
A natural generalization of \textsc{Knapsack} is to generalize capacity and weight vector to be $d$-dimensional vectors,
i.e. $C,w_i \in \mathbb{N}^d$ for some $d\in \mathbb{N}$.
The problem of finding a maximum profit packing fulfilling all $d$ constraints
is then known as the $d$-dimensional \textsc{Knapsack}~problem.\looseness=-1

In the \textsc{MultipleKnapsack} problem, one is given an instance $I$
consisting of items with assigned weights and profits,
just like in the  \textsc{Knapsack} problem.
However, in contrast, we are given not one but $m$ different knapsacks with respective capacities $C^{(1)},...,C^{(m)}$.
The goal is to find $m$ disjoint subsets $S^{(1)},...,S^{(m)}$ of $I$,
such that the total profit $\sum_{j=1}^m \PROFIT(S^{(j)})$ is maximized w.r.t. the capacity conditions
\begin{equation*}
\WEIGHT(S^{(j)}) \leq C^{(j)}, \quad \forall j = 1,...,m.
\end{equation*}
Note that the \textsc{Knapsack} problem is a special case of the \textsc{MultipleKnapsack} problem with $m = 1$.

Another generalization of the standard \textsc{Knapsack} problem is
\textsc{2DGeoKnapsack}.
This problem takes as input the width $W\in \mathbb{N}$ and height $H\in \mathbb{N}$ of the knapsack
and a set $I$ of axis-aligned rectangles $r \in I$
with widths $w_r\in \mathbb{N}$, heights $h_r\in \mathbb{N}$, and profits $p_r\in \mathbb{N}$.
An optimal solution to this instance consists of a subset $S \subseteq I$ of the rectangles
together with a non-overlapping axis-aligned packing of $S$ inside the rectangular knapsack of size $W \times H$
such that $\PROFIT(S)$ is maximized.\looseness=-1
% We will see that \textsc{2DGeoKnapsack} satisfies the criteria of our framework. Thus, there exists a polynomial-time online algorithm with a total migration factor bounded by $O(1/\epsilon)$.

%Observe that the standard one-dimensional \textsc{Knapsack} problem is a special case of the
%\textsc{2DGeoKnapsack} problem, since any \textsc{Knapsack} instance can be
%uniquely identified with an instance of \textsc{2DGeoKnapsack}: Every item $i$ with
%weight $w_i$ and profit $p_i$ corresponds to a rectangle $r_i$ with width
%$w_{r_i} := w_i$, height $h_{r_i} := 1$, and profit $p_{r_i} := p_i$. The
%capacity $C$ corresponds to $W := C$ and $H := 1$.

%%%%%%%%%%%%%%%%%%%%%%%%%%%%%%%%%%%%%%%%%%%%%%%%%%%%%%%%%%%%%%%%%%%%%%%%
\subsection{Independent Set}
%%%%%%%%%%%%%%%%%%%%%%%%%%%%%%%%%%%%%%%%%%%%%%%%%%%%%%%%%%%%%%%%%%%%%%%%
Another classical optimization problem is the \textsc{MaximumIndependentSet} problem.
While it can be considered for different types of graphs,
the most basic variant is defined on a graph $G = (V, E)$ with a set of nodes $V$
and a set of corresponding edges $E$.
A subset $S \subseteq V$ is called \emph{independent}
if for all $u,v \in S$ it holds that $(u,v) \notin E$, i.e. $u$ and $v$ are not neighbours.
A maximal independent set is then an independent set
that is no strict subset of another independent set.
\textsc{MaximumIndependentSet} is the problem of finding a maximum independent set for a given graph~$G$.
In the literature,
\textsc{MaximumIndependentSet} is,
among others, studied in planar, perfect, or claw-free graphs.
For the online variant,
we usually assume that a node is added (or removed) to the instance in every time step along with its adjacent edges.\looseness=-1

Closely related to the well-studied \textsc{MaximumIndependentSet} problem is
the \textsc{MaximumDisjointSet} problem.
For a given instance $I$ that consists of items with a geometrical shape,
the goal is to find a largest disjoint set
which is a set of non-overlapping items.
As we can convert an \textsc{MDS} instance to an \textsc{MIS} instance,
we sometimes use \textsc{MIS} to also denote this problem.
\textsc{MDS} is often considered limited to certain types of objects.
These can be (unit-sized) disks, rectangles, polygons or other objects,
and any $d$-dimensional generalization of them.
%In this context,
%the geometrical shape of the items is usually limited to a specific type.
%The limitation to unit disks or unit squares is very common,
%but \textsc{MDS} can be also considered for disks and squares of arbitrary size or in general for fat objects,
%whose lengths in its different dimensions are similar. 
We will also consider pseudo-disks,
which are objects that pairwise intersect at most twice in an instance.

The standard \textsc{MIS} and \textsc{MDS} problems
are both special cases of the generalized \textsc{MaxWeightIndependentSet} or \textsc{MaxWeightDisjointSet}, respectively,
where each node $i$ is assigned a profit value $w_i$.
The goal for these problems is to find a maximum profit independent subset.
%For the weighted variants,
%each node or item $i$ is associated to a weight $w_i$,
%and the objective is to find an independent or disjoint set with maximum weight.

%%%%%%%%%%%%%%%%%%%%%%%%%%%%%%%%%%%%%%%%%%%%%%%%%%%%%%%%%%%%%%%%%%%%%%%%
\subsection{Our Results}
%%%%%%%%%%%%%%%%%%%%%%%%%%%%%%%%%%%%%%%%%%%%%%%%%%%%%%%%%%%%%%%%%%%%%%%%
Our main result is a framework that is strongly inspired by the approach of Berndt \emph{et al.}~\cite{DBLP:conf/waoa/BerndtDGJK19}\textbf{}.
For minimization problems, they proposed a framework using two known algorithms,
one online and one offline algorithm,
in order to solve a given problem.
This approach behaves a bit differently for maximization problems in terms of the competitive ratio.
While a respective $\alpha$-approximation offline algorithm paired with a fitting $\beta$-competitive online algorithm yields an $\alpha + O(1) \beta \epsilon$ competitive-algorithm for minimization problems,
we show that for maximization problems such fitting algorithms will result in a $\alpha \cdot \beta$-competitive algorithm.
The general analysis appears in appendix $\ref{GenFW}$.\looseness=-1

In this work we discuss a class of problems that is characterized by the common task of choosing a subset of objects with maximum profit fulfilling some secondary constraints.
Many important problems like the above presented variants of \textsc{Knapsack}
or \textsc{MaximumIndependentSet} are covered by this class of problems. We show that for these kind of problems, which we will call \emph{choosing problems} in the following,
the framework can be simplified by completely removing the online algorithm.
%This reduces the framework to the very simple procedure of occasionally applying an offline algorithm
%(while ignoring part of the changes to the input instance),
%until we just replace our so far unchanged solution by a completely new offline solution.
%This very simple approach surprisingly yields an algorithm with the ratio of the offline algorithms minus an $\epsilon>0$ error
%using only a migration factor of $O(1/\epsilon)$.

%Our main result is a framework that allows us to solve a wide variety of
%maximization problems. 
%The idea of the framework is to combine an online algorithm and an offline
%algorithm by using them in an alternating fashion similar to Berndt \emph{et al.}~\cite{DBLP:conf/waoa/BerndtDGJK19}\textbf{}. 
%We will show that if certain criteria are met by the algorithms, the given
%problem, and the migration potential, one can achieve a competitive ratio close
%to the approximation ratio of the offline algorithm, while only using small
%migration factor. 
%We show that for a class of problems, which will be
%characterized by the common task of choosing a subset of objects with maximum
%profit, we achieve almost the ratio of the offline algorithms minus an
%$\epsilon>0$ error with only a migration factor of $O(1/\epsilon)$. 

Using a simplified framework, we achieve $(1-\epsilon)$-competitive algorithms for \textsc{Knapsack} 
even when generalized to arbitrary but fixed dimension $d$. In the \textsc{2DGeoKnapsack} where
we additionally interpret items as rectangles that need to be packed into a rectangular knapsack, we achieve a
$(\frac{9}{17}-\epsilon)$-competitive ratio. We also consider problem variants outside of the class of choosing problems and show that the static cases of \textsc{MaximumIndependentSet} for planar graphs with arriving edges and \textsc{MultipleKnapsack} also admit robust approximation schemes.
We  complement these positive results by also proving lower bounds for the
necessary migration  showing that some of our results are indeed tight. % Furthermore, we give a framework
% to create such lower bounds which is based on the simple idea on forcing the algorithm to switch between
% two solutions. Using this idea, we prove lower bounds of $\Omega(1/\epsilon)$ for \textsc{SubsetSum} and therefore any 
% \textsc{Knapsack} variant we considered. Additionally, we show the same lower bound for \textsc{MaximumIndependentSet} on arbitrary unweighted graphs
% and on weighted graphs that include a path of length three. We further study lower bounds for the static case and consider cases where the adversary may start with an arbitrary instance.
We also give lower bounds for different variants, including starting with an
adversarially chosen solution and 
% show that even more migration is necessary for some
different migration models.% , e.g. considering item weights as migration potential. We refer to appendix \ref{omitted} for a table summarizing all results.

%%Tables were here

%%%%%%%%%%%%%%%%%%%%%%%%%%%%%%%%%%%%%%%%%%%%%%%%%%%%%%%%%%%%%%%%%%%%%%%%
\subsection{Related Work}
%%%%%%%%%%%%%%%%%%%%%%%%%%%%%%%%%%%%%%%%%%%%%%%%%%%%%%%%%%%%%%%%%%%%%%%%
\paragraph*{Upper Bounds: } The general idea of bounded migration was introduced by
Sanders, Sivadasan, and Skutella~\cite{DBLP:journals/mor/SandersSS09}.
They developed an $(1+\epsilon)$-competitive algorithm with non-amortized
migration factor $f(1/\epsilon)$ for the \textsc{MakespanScheduling} problem.
G{\'{a}}lvez \emph{et al.}~\cite{DBLP:conf/esa/GalvezSV18} showed two
$(c+\epsilon)$-competitive algorithms with migration factor
$(1/\epsilon)^{O(1)}$ for some constants $c$ for the same problem. 
Skutella and Verschae~\cite{DBLP:journals/mor/SkutellaV16} were able to transfer
the $(1+\epsilon)$-competitive algorithm also to the setting, where items
depart. 
They also considered the \textsc{MachineCovering} problem and obtained an
$(1+\epsilon)$-competitive algorithm with amortized migration factor
$f(1/\epsilon)$. 
For \textsc{BinPacking}, Epstein and Levin~\cite{DBLP:journals/mp/EpsteinL09}
presented a $(1+\epsilon)$-competitive algorithm with non-amortized migration
factor $f(1/\epsilon)$.
Jansen and Klein~\cite{DBLP:conf/icalp/JansenK13} were able to obtain a
non-amortized migration factor of $(1/\epsilon)^{O(1)}$ for this problem, and
Berndt \emph{et al.}~\cite{DBLP:conf/approx/BerndtJK15} showed that such a
non-amortized migration factor is also possible for the scenario, where items
can depart.
Considering amortized migration, Feldkord \emph{et
  al.}~\cite{DBLP:conf/icalp/FeldkordFGGKRW18} presented a
$(1+\epsilon)$-competitive algorithm with migration factor $O(1/\epsilon)$ that
also works for departing items. 
Epstein and Levin~\cite{epstein2013cubepacking} investigated a multidimensional
extension of \textsc{BinPacking} problem, called \textsc{HypercubePacking} where
hypercubes are packed geometrically.
They obtained an $(1+\epsilon)$-competitive algorithm with worst-case migration
factor $f(1/\epsilon)$. 
For the preemptive variant of \textsc{MakespanScheduling}, Epstein and
Levin~\cite{DBLP:journals/algorithmica/EpsteinL14} obtained an exact online
algorithm with non-amortized migration factor $1+1/m$. 
Berndt \emph{et al.}~\cite{berndt2019bincovering} studied the
\textsc{BinCovering} problem with amortized migration factor and non-amortized
migration factor and obtained $(1+\epsilon)$-competitive algorithms and matching
lower bounds, even if items can depart.
Jansen \emph{et al.}~\cite{jansen2017strip} developed a
$(1+\epsilon)$-competitive algorithm with amortized migration factor
$f(1/\epsilon)$ for the \textsc{StripPacking} problem.
Finally, Berndt \emph{et al.}~\cite{DBLP:conf/waoa/BerndtDGJK19} developed a
framework similar to this work, but only for \emph{minimization problems}, and
showed that for a certain class of packing problems, any 
$c$-approximate algorithm can be combined with a suitable online algorithm to
obtain a $(c+\epsilon)$-competitive algorithm with amortized migration factor
$O(1/\epsilon)$. 

\paragraph*{Lower Bounds: }
Skutella and Verschae~\cite{DBLP:journals/mor/SkutellaV16} showed that a
non-amortized migration factor of $f(1/\epsilon)$ is not possible for any
function $f$ for \textsc{MachineCovering}. 
Berndt \emph{et al.}~showed that a non-amortized migration factor of
$\Omega(1/\epsilon)$ is needed~\cite{DBLP:conf/approx/BerndtJK15} for
\textsc{BinPacking}, and Feldkord \emph{et
  al.}~\cite{DBLP:conf/icalp/FeldkordFGGKRW18} showed that this also holds for
the amortized migration factor. 
Epstein and Levin~\cite{DBLP:journals/algorithmica/EpsteinL14} showed that exact
algorithms for the makespan minimization problem on uniform machines and for
identical machines in the restricted assignment setting have worst-case
migration factor at least~$\Omega(m)$.

\paragraph*{Dynamic Algorithms: }
In the semi-online setting, there are several metrics that one tries to
optimize.
First of all, there is the competitive ratio measuring the quality of the
solution.
Second, in order to prevent that the online problem simply degrades to the
offline problem, one needs to bound some resource.
In the setting that we consider, we bound the amount of repacking possible, as
such a repacking often comes with a high cost in practical applications.
In an alternate approach, often called \emph{dynamic algorithms}, we restrict
the running time needed to update a solution, ideally to a sub-linear function (see e.\,g.~the
surveys~\cite{DBLP:journals/rairo/BoriaP11,DBLP:conf/sofsem/Henzinger18}).  
Note that the amount of repacking used here can be arbitrarily high (using a
suitable representation of the current solution). 
This setting has been  also studied recently for
\textsc{Knapsack} variants~\cite{DBLP:journals/corr/abs-2007-08415} and 
\textsc{MaximumIndependentSet} variants~\cite{DBLP:conf/compgeom/Henzinger0W20}. 
There are also works aiming to combine both of the before mentioned
approaches~\cite{DBLP:conf/stoc/LackiOPSZ15,DBLP:conf/stoc/GuptaK0P17}.

%%%%%%%%%%%%%%%%%%%%%%%%%%%%%%%%%%%%%%%%%%%%%%%%%%%%%%%%%%%%%%%%%%%%%%%%
\section{Upper Bounds for Choosing Problems}
%%%%%%%%%%%%%%%%%%%%%%%%%%%%%%%%%%%%%%%%%%%%%%%%%%%%%%%%%%%%%%%%%%%%%%%%

\subsection{Framework for Choosing Problems}
\label{sec:choosing-problems}

In this section, we will consider the aforementioned class of \textit{choosing problems}. 
In general, a choosing problem is defined by a set of objects with some properties,
and the objective is to select a subset of these objects with maximum profit,
while potentially respecting some secondary constraints. 

\begin{definition}\label{Def:ChoosingP}
Consider a problem $\Pi$ where every instance $I\in \Pi$ is given by a set of
objects, where each object $i\in I$ is assigned a fixed profit value $p_i$, and
a set of feasible solutions $\SOL(I)$. We call $\Pi$ a choosing problem if
$\SOL(I)\subseteq \mathcal{P}(I)$, and the objective of some instance $I\in \Pi$ is to find a
subset $S\in \SOL(I)$ while maximizing the total profit $\PROFIT(S) = \sum_{i\in S}{p_i}$.
We further make the following two requirements for choosing problems that we will discuss in this paper:

\begin{enumerate}[(i)]
\item For any feasible solution $S \in \SOL(I)$, we have that any subset $S'
  \subseteq S$ is also a feasible solution, i.\,e.~$S'\in \SOL(I)$. 
\item For any solution $S\in \SOL(I_{t})$ for an instance $I_t$ with respective
  follow-up instance $I_{t+1}= I_t \triangle \{i_{t+1}\}$, the solution $S' := S
  \backslash \{i_{t+1}\}$ stays feasible for $I_{t+1}$, i.\,e.~$S'\in
  \SOL(I_{t+1})$. 
\end{enumerate}

For choosing problems, the profits of objects are their migration potential and costs; $\Delta(I_t)=p_i$, where~$i$ is the object added or removed at time $t$.
Given two solutions $S_1,S_2 \subseteq I$, we further have that $\phi(S_{1} \to S_{2}) = \PROFIT(S_1 \triangle S_2)$.

\end{definition}

While we restrict the range of problems with these properties,
it is necessary to do so since an adversary can enforce an unreasonably high migration factor for problems we excluded this way. In particular the first property guarantees that there is no low profit item with low migration potential added by the adversary which would allow for a completely new solution with high profit that we would need to switch to.
The second property serves a similar purpose, as it prevents the adversary from adding any arbitrary items making our current solution infeasible.

We could approach choosing problems like Berndt \emph{et al.}~\cite{DBLP:conf/waoa/BerndtDGJK19}
and use a greedy online algorithm in conjunction with the best known offline algorithm.
While this would also create good results,
we want to show that an online algorithm is not even necessary.
We propose instead the algorithm that computes an offline solution $S$ with profit $V$ and then waits until the total profit of items being added or removed from the instance exceeds $ \epsilon V$. At this point we simply replace $S$ with a new offline solution for the current instance. This algorithm already achieves a competitive rate close to the approximation ratio of the offline algorithm.

%By the definition of choosing problems the profit distance of the optimum and a fixed solution can only change by the total profit of new or removed items. 
%Through the nature of choosing problems the distance between optimal value of a changing instance and profit of a fixed solution can only increase by a large amount, when items with respective profit are a

%Since the profit of our solution and the optimal value can only change by a large amount, when a respective set of items with matching profit has changed in the instance. \todo{incomplete sentence: Since ... [conclusion is missing]}
%This allows us to keep a recent offline solution $S$ with profit $V$
%and wait until the total profit of items being added or removed from the instance sums up to a value greater than $ \epsilon V$.
%We then simply replace the old offline solution by a new one for the current instance.

%\begin{lstlisting}[caption={Framework for Choosing Problem using $A_{\textrm{off}}$},backgroundcolor = \color{lightgray},label=list:8-6,captionpos=t,abovecaptionskip=-\medskipamount, escapeinside={(*}{*)}%,basicstyle=\ttfamily\linespread{1.15}\footnotesize]
%let (*$V :=0$*),(*$\Delta := 0$*) and (*$S$*) be an empty solution;
%for each time (*$t$*) and arriving or departing object (*$i$*)
%  set (*$\Delta = \Delta + p_i$*);
%  if  (*$\Delta > \epsilon V$*)
%    set (*$S = A_{\textrm{off}}(I_t)$*), (*$\Delta = 0$*) and (*$V = \PROFIT(S)$*);
%  else if item (*$i$*) arrived do
%    nothing;
%  else if item (*$i$*) departed do
%    remove (*$i$*) from (*$S$*) if necessary;
%  endif
%\end{lstlisting}

%set (*$\Delta = 0$*) and (*$V = \PROFIT(S)$*)
%let (*$S$*) be an empty solution;

\begin{theorem}\label{ChoosingFW}
Let $A_{\textrm{off}}$ be an offline algorithm with an approximation ratio of $\alpha$.
Then the resulting framework using $A_{\textrm{off}}$ is a $(1-\epsilon)\cdot\alpha$-competitive algorithm requiring a migration factor of $O(1/(\epsilon\cdot\alpha))$.
\end{theorem}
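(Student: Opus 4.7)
The plan is to decompose time into intervals delimited by the repack events $0 = t_0 < t_1 < t_2 < \cdots$ and analyze the competitive ratio and the migration factor separately. Let $S_j$ denote the offline solution computed by $A_{\textrm{off}}$ at time $t_j$, with profit $V_j = \PROFIT(S_j)$. By the trigger rule of the framework, for every $t \in [t_j, t_{j+1})$ the accumulated potential $\Delta_{t_j:t}$ stays at most $\epsilon V_j$, and between repacks, by property~(ii) of Definition~\ref{Def:ChoosingP}, the algorithm merely deletes departing items from the current solution to preserve feasibility.

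For the competitive ratio at a time $t \in [t_j, t_{j+1})$, I would lower bound the current profit by $V_j - D_j(t) \geq (1-\epsilon)V_j$, where $D_j(t)$ denotes the total profit of items that departed during $(t_j,t]$, since only such items have left $S_j$. For the upper bound on $\OPT(I_t)$, I would iterate property~(ii) backwards along the sequence of arrival/departure events that separate $I_{t_j}$ from $I_t$ to conclude that for every $S^{*} \in \SOL(I_t)$ the restriction $S^{*} \cap I_{t_j}$ lies in $\SOL(I_{t_j})$. This yields $\OPT(I_t) \leq \OPT(I_{t_j}) + A_j(t) \leq V_j/\alpha + \epsilon V_j$, where $A_j(t)$ is the arriving profit during $(t_j,t]$. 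Dividing the bounds gives a ratio of at least $(1-\epsilon)\alpha/(1+\epsilon\alpha)$, which after rescaling $\epsilon$ by a constant factor becomes $(1-\epsilon)\alpha$ as claimed.

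For the migration, I would bound the cost incurred in the interval $(t_{j},t_{j+1}]$ by $V_j + V_{j+1}$: the intra-interval departure costs are covered by the decrease in $\PROFIT(\bar S_j)$ from $V_j$, and the repack cost $\phi(\bar S_j \to S_{j+1}) \leq \PROFIT(\bar S_j) + \PROFIT(S_{j+1})$ absorbs the rest. Using the offline guarantee $V_{j+1} \leq \OPT(I_{t_{j+1}}) \leq V_j/\alpha + \Delta_{t_j:t_{j+1}}$ and the trigger bound $V_j \leq \Delta_{t_j:t_{j+1}}/\epsilon$, summing over all completed intervals gives total migration at most $O(1/\epsilon)\,\Delta_{0:t}$ for the telescoping part and an additive $O(1/(\epsilon\alpha))\Delta_{t_{k-1}:t_k}$ contribution from the last, possibly incomplete, interval. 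Since $\alpha \leq 1$, both terms combine to $O(1/(\epsilon\alpha))\cdot\Delta_{0:t}$, matching the claim.

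The main obstacle I expect is justifying the upper bound $\OPT(I_t) \leq \OPT(I_{t_j}) + A_j(t)$ from properties~(i) and~(ii) alone; this requires careful induction over every add/remove event between $t_j$ and $t$, relying on the fact that items removed from the instance are vacuously absent from $S^{*} \subseteq I_t$, so the iterated deletions from $S^{*}$ reduce exactly to $S^{*} \cap I_{t_j}$. A secondary technicality is the possible overshoot of the threshold by a single large arrival making $\Delta_{t_j:t_{j+1}}$ exceed $\epsilon V_j$: the migration bound still goes through because the extra $\Delta_{t_j:t_{j+1}}$ term in the cost is absorbed by the very potential that was deposited in that interval, so no separate case analysis is needed.
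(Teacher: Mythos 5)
Your proposal is correct and follows essentially the same route as the paper's proof: the same phase decomposition at repacking times, the same key bound $\OPT(I_t)\le\OPT(I_{t_j})+A_j(t)$, the bound $\phi(S\to S')\le\PROFIT(S)+\PROFIT(S')$, and the trigger inequality $V_j<\Delta_{t_j:t_{j+1}}/\epsilon$. The only cosmetic differences are that you divide the two profit bounds and then rescale $\epsilon$ to absorb the $(1+\epsilon\alpha)$ denominator, whereas the paper carries the arrival term $A_{t:t'-1}$ through its chain of inequalities to reach $(1-\epsilon)\alpha$ directly, and that your explicit accounting of the intra-phase deletion costs and of the backward induction justifying $\OPT(I_t)\le\OPT(I_{t_j})+A_j(t)$ is, if anything, more careful than the paper's.
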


%Since we do not use an online algorithm, we in fact could simply wait until our current solution does not satisfy our desired competitive ratio. For most problems however, we can not track the optimal value in reasonable time. We can however conclude that in order for an adversary to make a given solution $S$ insufficient, he needs to add or remove more than $\epsilon \PROFIT(S)$ total profit of new objects. Considering this, we end up with a very simple online algorithm that uses a given offline algorithm as a subroutine.
%As we can see from the proof and the fact that we do not change our solution except the repacking, we can delay the repacking until the latest point and repack when the solution quality would break our desired competitive ratio. For most problems however it will be difficult to track the optimal value and check whether our solution is still good enough. We can however conclude that in order for an adversary to make a given solution $S$ bad, he needs to add or remove more than $\epsilon \PROFIT(S)$ total profit of new objects. Considering this, we end up with a very simple online algorithm that uses a given offline algorithm as a subroutine.

\subsection{Resulting upper bounds and necessary migration}

The framework introduced can be applied to any choosing problem with some existing offline algorithm.
We observe that by using Theorem~\ref{ChoosingFW} for problems that admit a PTAS,
we get a respective robust PTAS,
and for other problems,
we achieve the ratio of any offline algorithm with small additional error.
We note without proof that all problems mentioned in the following theorem are choosing problems
as solutions are made up of sets of items or nodes
and they stay feasible under the required circumstances.
For a more detailed recap on these problems we refer to the appendix.

\begin{theorem}\label{MainResults}
For the following problems there exists an online algorithm with competitive ratio  $1-\epsilon$ and  migration factor $O(1/\epsilon)$.
\begin{itemize}
\item \textsc{SubsetSum} and \textsc{Knapsack} using the FPTAS by Jin~\cite{JinKnapsackPTAS}
\item  $d$-dimensional \textsc{Knapsack} using the PTAS from Caprara \emph{et al.} ~\cite{MultiDimKnapsackCapra}
\item \textsc{MaximumIndependentSet} on unweighted planar graphs by using the  PTAS by Baker~\cite{Baker1983}
\item \textsc{MaximumIndependentSet} on (weighted) $d$-dimensional disk-like
  objects with fixed $d$ using the PTAS by Erlebach \emph{et al.}~\cite{Erlebach2005}
\item \textsc{MaximumIndependentSet} on pseudo-disks using a PTAS by Chan and Har-Peled~\cite{Chan2009}
\end{itemize}

Additionally, for the \textsc{2DGeoKnapsack} problem, there exists an online algorithm with a competitive ratio $9/17-\epsilon$ and migration factor $O(1/\epsilon)$ by using the $9/17$ approximation algorithm from G{\'{a}}lvez \emph{et al.} \cite{2DGeoKnapsack}
\end{theorem}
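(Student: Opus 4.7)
The plan is to split the algorithm's run into \emph{epochs}, each beginning with the computation of a fresh offline solution $S$ of profit $V=\PROFIT(S)\geq \alpha\cdot\OPT(I_{t_0})$ at some time $t_0$ and ending once the accumulated migration potential of arrivals and departures since $t_0$ first exceeds a threshold $\eta V$, where $\eta=\Theta(\epsilon)$ is fixed at the end. Within an epoch the maintained solution at time $t$ is simply $S\cap I_t$; combining Property~(ii) for each departure with Property~(i) keeps this set feasible throughout, so the algorithm ignores arrivals and merely drops departed items from $S$.

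For the competitive analysis at any time $t$ inside an epoch, denote by $N$ and $D$ the items that arrived and departed in $(t_0,t]$, so that $\PROFIT(N)+\PROFIT(D)\leq \eta V$. The key lemma is $\OPT(I_{t_0})\geq \OPT(I_t)-\PROFIT(N)$: starting from an optimum $S^\star$ of $I_t$, Properties~(i) and~(ii) let one iteratively trim the items of $N\cap S^\star$ and ``undo'' the intermediate insertions and deletions while preserving feasibility, showing that $S^\star\setminus N$ is feasible for $I_{t_0}$. Combined with $V\geq\alpha\OPT(I_{t_0})$ this yields $V(1+\alpha\eta)\geq\alpha\OPT(I_t)$, while the maintained solution satisfies $\PROFIT(S\cap I_t)\geq V-\PROFIT(D)\geq (1-\eta)V$. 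The resulting ratio $(1-\eta)\alpha/(1+\alpha\eta)$ is made at least $(1-\epsilon)\alpha$ by taking $\eta$ to be a small constant multiple of $\epsilon$ (using $\alpha\leq 1$).

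The migration factor is bounded per epoch. During the wait phase only departed items that were in $S$ incur $\phi$-cost, for a total of at most $\PROFIT(D)\leq \eta V$. At the epoch's end, the swap from $S\cap I_t$ to the fresh offline solution $S^{\mathrm{new}}$ costs at most $\PROFIT(S\cap I_t)+\PROFIT(S^{\mathrm{new}})\leq V+\OPT(I_t)=O(V/\alpha)$ by the bound on $\OPT(I_t)$ derived above. Since the epoch consumes at least $\eta V$ migration potential, its amortized migration ratio is $O(V/\alpha)/(\eta V)=O(1/(\epsilon\alpha))$; summing across completed epochs (plus the currently running one, whose still-unused potential only helps) preserves the bound globally.

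I expect the transfer lemma $\OPT(I_{t_0})\geq \OPT(I_t)-\PROFIT(N)$ to be the main obstacle, since Property~(ii) is one-directional and must be combined with Property~(i) in an induction over the sequence of instance changes; a key intermediate observation is that Property~(ii) applied with $i\notin S$ shows that enlarging the ground set never destroys feasibility of a solution avoiding the new item, which is what enables moving a fixed solution back to the earlier instance. Once this lemma is in hand, the remaining arithmetic is routine and both the stated competitive ratio and the migration bound $O(1/(\epsilon\alpha))$ follow after the rescaling $\eta=\Theta(\epsilon)$.
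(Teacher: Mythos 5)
Your proposal is correct and follows essentially the same route as the paper: the paper proves Theorem~\ref{ChoosingFW} with exactly this epoch/threshold algorithm (repack once the accumulated profit of arrivals and departures exceeds $\epsilon V$), the same drift bound $\OPT(I_{t})\le\OPT(I_{t_0})+\PROFIT(N)$, and the same swap-cost bound $\OPT(I_{t_0})+\OPT(I_{t})$ divided by the accumulated potential, and then obtains Theorem~\ref{MainResults} by instantiating $\alpha=1-\epsilon$ and rescaling. Your explicit justification of the transfer lemma via Properties~(i) and~(ii) is a detail the paper leaves implicit, but it is the same underlying fact.
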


We also show how to apply our framework in a setting outside of choosing problems.
We consider two problems still similar to the given context of \textsc{Knapsack} and \textsc{MaximumIndependentSet}.
\begin{theorem}
The static cases of \textsc{MultipleKnapsack} and \textsc{MaximumIndependentSet} on weighted planar graphs when adding edges both admit a robust PTAS with competitive ratio  $1-\epsilon$ and  migration factor $O(1/\epsilon)$.
\end{theorem}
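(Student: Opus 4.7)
The plan is to mimic the recompute-when-potential-exceeds-$\epsilon V$ strategy of Theorem~\ref{ChoosingFW}, handling each problem separately because neither strictly satisfies Definition~\ref{Def:ChoosingP}. For \textsc{MultipleKnapsack} a solution is an assignment of items to specific knapsacks rather than a plain subset of items, and for \textsc{MaximumIndependentSet} under arriving edges property~(ii) fails: inserting an edge between two currently chosen vertices immediately invalidates the solution. In both cases, however, the algorithmic template remains the same --- maintain an offline PTAS-solution $S$, accumulate migration potential over time, and trigger a full recomputation as soon as the accumulated potential surpasses $\epsilon'\cdot\PROFIT(S)$ for a rescaled $\epsilon' = \Theta(\epsilon)$.

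For the static \textsc{MultipleKnapsack}, only arrivals occur, so any feasible assignment $S$ stays feasible when a new item appears since we may simply refuse to place it. Taking the PTAS of Chekuri and Khanna as $A_{\textrm{off}}$, I would treat the profit of each arriving item as its migration potential, exactly as in a choosing problem. Since $\OPT$ can grow by at most the accumulated arrival profit $P_t \le \epsilon'\PROFIT(S)$ before the next recomputation, the competitive ratio is at least $(1-\epsilon')\OPT(I_{t_0})/(\OPT(I_{t_0})+P_t) \ge (1-\epsilon')/(1+\epsilon') \ge 1-\epsilon$ for $\epsilon' = \epsilon/2$, and the migration accounting is identical to Theorem~\ref{ChoosingFW}: every recomputation costs $O(\PROFIT(S))$ but is preceded by $\epsilon'\PROFIT(S)$ worth of fresh potential.

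For \textsc{MaximumIndependentSet} on weighted planar graphs with arriving edges, I would define the migration potential of an arriving edge $(u,v)$ to be $\min(p_u,p_v)$. Whenever such an edge arrives with both endpoints currently in $S$, I perform a local repair by removing the endpoint of smaller profit; its cost is exactly $\min(p_u,p_v)$, so these repairs are paid one-to-one by the edges' own potentials. A full recomputation via a weighted-planar-MIS PTAS is triggered once the potential accumulated since the last recomputation exceeds $\epsilon'\PROFIT(S)$. Because $\OPT$ only decreases as constraints are added, within a recomputation interval we have $\PROFIT(S)\ge(1-\epsilon')\PROFIT(S_{\textrm{new}})\ge(1-\epsilon')^2\OPT(I_t)$, which rescales to $(1-\epsilon)$-competitiveness; the recomputation-cost amortization again mirrors Theorem~\ref{ChoosingFW} and yields overall migration factor $O(1/\epsilon)$.

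The main obstacle is the second problem: edges carry no intrinsic profit, so a migration potential has to be invented that is simultaneously large enough to pay for each forced local repair and small enough relative to $\PROFIT(S)$ that the adaptive trigger fires infrequently. The choice $\min(p_u,p_v)$ meets the first requirement trivially; the second does not need to be bounded against $\OPT$ directly but only against $\PROFIT(S)$, which is kept close to $\OPT$ by the trigger. This decoupling of potential from $\OPT$, together with the monotone shrinking of $\OPT$ under edge arrivals, is what makes the framework go through even though edges are not first-class objects in Definition~\ref{Def:ChoosingP}.
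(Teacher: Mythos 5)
Your proposal is correct and follows essentially the same route as the paper: for \textsc{MultipleKnapsack} the paper also uses item profits as migration potential with an adaptive trigger at $\epsilon\cdot\PROFIT(S)$ and an offline PTAS recomputation, and for weighted planar \textsc{MaximumIndependentSet} with edge arrivals it likewise assigns each arriving edge the potential $\min(p_u,p_v)$, repairs locally by deleting the lighter endpoint, and recomputes when the accumulated potential exceeds $\epsilon\cdot\WEIGHT(S)$, exploiting that $\OPT$ is monotone non-increasing under edge insertions. The competitive-ratio and migration amortizations you sketch match the paper's calculations up to constant factors and the linear reparametrization of $\epsilon$.
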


Finally, one might ask how much migration is actually needed to achieve such results.
One can come up with quite simple lower bounds on the necessary migration for these kind of problems
by forcing the algorithm to switch between two different solutions.
If the adversary accomplishes this using little migration but with at least one solution being reasonably expensive to replace,
then this yields an interesting lower bound for \textsc{SubsetSum} and \textsc{MaximumIndependentSet}
showing that our framework is indeed tight in migration for all \textsc{Knapsack} variants and some \textsc{MaximumIndependentSet} variants.

\begin{theorem}
For the online \textsc{SubsetSum} problem and (weighted) \textsc{MaximumIndependentSet} problem,
there is an instance such that the migration needed for a solution with
value $(1-\epsilon)\OPT$ is $\Omega(1/\epsilon)$. 
\end{theorem}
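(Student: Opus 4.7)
The plan is to construct explicit adversarial instances, one for each problem, on which any algorithm maintaining a $(1-\epsilon)$-competitive solution must have amortized migration factor $\Omega(1/\epsilon)$. The common idea is to engineer a ``knife-edge'' configuration in which the $(1-\epsilon)$-slack is nearly exhausted, so that adding or removing only a constant number of tiny items repeatedly forces the algorithm to swap out a single heavy item whose profit/weight is $\Theta(1)$.

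For weighted \textsc{MaximumIndependentSet} I would use a star with a central node $c$ of weight $1$ and leaves of weight $\epsilon$ each, all adjacent to $c$ but pairwise non-adjacent. After an initial phase in which $c$ and roughly $1/\epsilon$ leaves arrive, the algorithm can achieve value $1$ either with $\{c\}$ or with the full set of leaves. The adversary then repeats a two-step cycle: it first arrives a small constant number of extra leaves, pushing the total leaf-weight strictly above $1/(1-\epsilon)$ and forcing any $(1-\epsilon)$-competitive algorithm to drop $c$ in favour of the leaves (migration $\Omega(1)$); then it removes a slightly larger constant number of leaves, pulling the total leaf-weight strictly below $1-\epsilon$ and forcing the algorithm to reintroduce $c$ (migration $\Omega(1)$). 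Each full cycle uses only $O(\epsilon)$ migration potential (a constant number of arrivals and departures of weight $\epsilon$) while incurring $\Omega(1)$ total migration, so amortising over $T$ cycles gives the claimed factor $\Omega(1/\epsilon)$.

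For \textsc{SubsetSum} I would mirror the construction with capacity $C = 1$, a heavy item $a$ of size $1-\Theta(\epsilon)$ in the role of $c$, and items of size $\Theta(\epsilon)$ in the role of the leaves, with sizes chosen so that $a$ together with any single small item already exceeds $C$ (giving the exclusivity that the edges supply in the MIS case) while the small items alone can still reach total weight strictly above $s_a/(1-\epsilon)$. The same cycling argument then applies: adding a constant number of small items forces the switch from $\{a\}$ to the small items, removing a constant number forces the switch back, and every cycle realises $\Omega(1)$ migration against $O(\epsilon)$ potential.

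The main obstacle I anticipate is ruling out ``hedging'' strategies in which the algorithm simultaneously keeps $c$ (resp.\ $a$) together with some of the competing set. In both constructions the feasibility constraint (the star edges for MIS, the capacity saturation $s_a + s_b > C$ for SubsetSum) makes any such mixture infeasible, so the algorithm's state is effectively binary at each triggering event. A short case analysis over the algorithm's possible states immediately before each trigger, combined with the $(1-\epsilon)$-competitiveness requirement, then shows that each cycle's migration cost is genuinely $\Omega(1)$ as claimed, completing the lower bound.
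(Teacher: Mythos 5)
Your construction is, at bottom, the same oscillation argument the paper uses: set up two mutually exclusive near-optimal solutions, and let a constant number of $\Theta(\epsilon)$-potential arrivals/departures per round force an $\Omega(1)$ swap, amortized over many rounds. The paper instantiates it more economically: for \textsc{SubsetSum} it takes capacity $C=\lfloor 1/\epsilon-1\rfloor$, two heavy items of sizes $C-2$ and $C-1$, and a single cycling item of size $2$, so the algorithm oscillates between $\{j\}$ and $\{i,k\}$; for weighted \textsc{MaximumIndependentSet} it uses just a three-node path carrying the same three weights. This makes the arithmetic trivial and, for MIS, makes the bound hold even on planar and unit-disk graphs, whereas a star with $\Theta(1/\epsilon)$ pairwise-independent leaves around one center is not realizable as a unit-disk graph.

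Two points in your version need attention. First, in your \textsc{SubsetSum} instance each small item must have size strictly greater than $C-s_a$ to conflict with $a$, yet some subset of the small items must have total size in the window $(s_a/(1-\epsilon),\,C]$, whose width is smaller than a single small item; a generic choice of ``sizes $\Theta(\epsilon)$'' will miss this window, so you must pin the parameters down (e.g.\ $m$ equal items of size $1/m$ summing to exactly $C$, with $1/m>C-s_a$). Second, the theorem's ``(weighted)'' is read by the paper as also covering the unweighted variant: there your star degenerates (the leaf set outvalues the center outright and no oscillation occurs), and the paper instead uses a complete bipartite graph between two independent sets of $\Theta(1/\epsilon)$ unit-weight vertices, cycling two vertices of one side. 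Neither point changes the verdict that your approach is sound for the weighted/SubsetSum cases; they are the details you would have to supply to match the paper's statement in full.
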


\section{Conclusion}
In this paper, we present a general framework to transfer approximation algorithms for many
maximization problems to the semi-online setting with bounded migration.
Furthermore, we show that the algorithms constructed this way achieve optimal migration.
We expect our framework to be also applicable to other problems such as
\textsc{2DGeoKnapsack} variants with more complex objects~\cite{DBLP:conf/icalp/MerinoW20,DBLP:conf/esa/0001KW19},
\textsc{3DGeoKnapsack}~\cite{DBLP:journals/jcst/DiedrichHJTT08}, \textsc{DynamicMapLabeling}~\cite{DBLP:conf/esa/Bhore0N20},
\textsc{ThroughputScheduling}~\cite{DBLP:conf/ifipTCS/CieliebakEHWW04}, or \textsc{MaxEdgeDisjointPaths}~\cite{DBLP:journals/siamdm/ErlebachJ01}.

\bibliography{lib}
\newpage
\appendix

\section{Omitted Tables and Proofs} \label{omitted}
\subsection{Summary of Results}
\begin{center}
  \small
	\begin{threeparttable} 
\begin{tabular}{|c|c|c|c|}
\hline
\textsc{Problem} & Competitive Ratio & MF  & lower bound MF\\
\hline\hline
\textsc{Knapsack/SubsetSum} (fixed $d$) & $(1-\epsilon)$ & $O(1/\epsilon)$  & $\Omega(1/\epsilon)$\\
\hline
\textsc{2DGeoKnapsack} & $(9/17-\epsilon)$ & $O(1/\epsilon)$  & open \tnote{1} \\
\hline
\textsc{MIS} (unweighted planar) & $(1-\epsilon)$ & $O(1/\epsilon)$  & open\\
\hline
\textsc{MIS} (unweighted unit-disk) & $(1-\epsilon)$ & $O(1/\epsilon)$  & open\\
\hline
\textsc{MIS} (weighted disk-like objects\tnote{2}\ )  & $(1-\epsilon)$ & $O(1/\epsilon)$  & $\Omega(1/\epsilon)$\\
\hline
\textsc{MIS} on pseudo-disks & $(1-\epsilon)$ & $O(1/\epsilon)$  & open\\
\hline
\textsc{MultipleKnapsack}  & $(1-\epsilon)$ & $O(1/\epsilon)$  & $\Omega(1/\epsilon)$\\
\hline
\textsc{MIS} (unw. planar, edge arrival) & $(1-\epsilon)$ & $O(1/\epsilon)$  & open\\
\hline
\end{tabular}
\begin{tablenotes}\footnotesize 
\item[1] We prove a lower bound of $\Omega(1/\epsilon)$ for comp. ratio $(1-\epsilon)$.
\item[2] Result applies to all objects that the PTAS from Erlebach \emph{et al.}~\cite{Erlebach2005} can be used on, even in higher dimensions.
\end{tablenotes}
\end{threeparttable} 
\end{center}

%\begin{center}
%  \small
%	\begin{threeparttable} 
%\begin{tabular}{|c|c|c|c|}
%\hline
%\textsc{Non-Choosing Problem} & Competitive Ratio & MF  & lower bound MF\\
%\hline\hline
%\textsc{MultipleKnapsack}  & $(1-\epsilon)$ & $O(1/\epsilon)$  & $\Omega(1/\epsilon)$\\
%\hline
%\textsc{MIS} (unw. planar, edge arrival) & $(1-\epsilon)$ & $O(1/\epsilon)$  & open\\
%\hline
%\end{tabular}
%\end{threeparttable} 
%\end{center}

\subsection{Proofs}

\begin{proof}[Proof of Theorem~\ref{ChoosingFW}]
In the following,
we refer to time steps where the offline algorithm is applied as repacking times.
Note that for the start of any online instance and the first arriving items,
we will just add the first eligible item with the offline algorithm when it arrives. This results in the first repacking time. Consider now any repacking time $t$ with generated solution $S_t$, and let $t'$ be the first point of time where $\Delta_{t:t'} > \epsilon \PROFIT(S_t)$. First we will show that the migration factor of the applied framework is small. We do so by proving that the migration potential between two repacking times accommodates for the repacking at the later repacking time. Denote with $A_{t:t'}$ the total profit of items that arrived up till time $t'$.

We now have that
\begingroup
\allowdisplaybreaks
\begin{align*}
&\frac{\phi(S_t \to S_{t'})}{\Delta_{t:t'}}\le\frac{\OPT(I_{t}) + \OPT(I_{t'})}{\Delta_{t:t'}}
\le\frac{\OPT(I_{t}) + \OPT(I_{t}) + A_{t:t'}}{\Delta_{t:t'}}
\le\\
&\frac{\OPT(I_{t}) + \OPT(I_{t}) + \Delta_{t:t'}}{\Delta_{t:t'}}
\le\frac{2 \cdot \OPT(I_{t})}{\Delta_{t:t'}} +1
\le\frac{2 \cdot \PROFIT(S_{t})}{\alpha \Delta_{t:t'}} +1 <\\
&\frac{2 \cdot \PROFIT(S_{t})}{\alpha \epsilon \PROFIT(S_t)} +1 \in O(1/(\epsilon\cdot\alpha)).
\end{align*}
\endgroup
Now it is left to show that up till time $t'-1$ we have maintained $(1-\epsilon)\alpha$ competitive rate. Denote with $A_{t:t'-1}$ and  $R_{t:t'-1}$ the total profit of items that arrived or departed up till time $t'-1$, and note that by choice of $t'$, we have that $A_{t:t'-1} +R_{t:t'-1} \le \epsilon \PROFIT(S_t)$. We now have 
\begin{align*}
&\PROFIT(S_{t'-1}) \ge \PROFIT(S_{t})- R_{t:t'-1}  
=\\
&\PROFIT(S_{t}) + A_{t:t'-1} - A_{t:t'-1}- R_{t:t'-1}\geq 
 \PROFIT(S_{t})+ A_{t:t'-1} - \epsilon \PROFIT(S_t) 
 =\\
 &(1-\epsilon) \PROFIT(S_{t})+ A_{t:t'-1}\geq 
 (1-\epsilon)\alpha \OPT(S_{t})+ A_{t:t'-1}\ge\\
 &(1-\epsilon)\alpha (\OPT(S_{t})+ A_{t:t'-1})\ge  (1-\epsilon)\alpha (\OPT(S_{t'-1})).
\end{align*}
\end{proof}

\begin{proof} [Proof of Theorem~\ref{MainResults}]
We prove that a $(1-\epsilon)$ approximation yields the desired result. The statement for \textsc{2DGeoKnapsack} follows similarly. Note that using a $(1-\epsilon)$ approximation with Theorem~\ref{ChoosingFW}, we achieve an online
algorithm with a competitive ratio of $(1-\epsilon)^2= 1-2\epsilon +\epsilon^2 $. By applying the framework with $\epsilon' = 1/2 \epsilon$,
we achieve the desired PTAS quality. The required migration of our
framework is bounded by $O(\frac{1}{\epsilon (1-\epsilon)}) = O(1/\epsilon)$. 
\end{proof}

\section{Upper Bounds}
\subsection{Knapsack Problems}

We will now discuss the problems in further detail. Note that \textsc{Knapsack} and the optimization variant of \textsc{SubsetSum} trivially are part of the class of choosing problems. The goal of these problems is to find a feasible subset, that is feasible in regards to the capacity constraint of the given Knapsack. Observe that both additional requirements for choosing problems are also fulfilled: Removing any item from a feasible solution will not make it infeasible as the new solution will have less total weight and in the same way adding some item $i$ will not make any prior infeasible subset of items feasible as it will only increase the total weight. Since \textsc{Knapsack} is a well studied problem we have access to a wide range of algorithms. Since we are looking to achieve a robust PTAS result, we will apply the FPTAS of Jin~\cite{JinKnapsackPTAS} and obtain the following result with our framework.

 \begin{theorem}

For the \textsc{Knapsack} and accordingly \textsc{SubsetSum} problem, there exists an online algorithm with competitive ratio  $1-\epsilon$ and  migration factor  $O(1/\epsilon)$.
\end{theorem}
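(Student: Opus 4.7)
The plan is to observe that \textsc{Knapsack} (and hence its special case \textsc{SubsetSum}) is a choosing problem in the sense of Definition~\ref{Def:ChoosingP}, and then invoke Theorem~\ref{ChoosingFW} with Jin's FPTAS~\cite{JinKnapsackPTAS} as the offline algorithm, exactly as in the proof of Theorem~\ref{MainResults}.

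First I would verify the two conditions of Definition~\ref{Def:ChoosingP}. The feasible set of an instance $(I,C)$ is $\SOL(I)=\{S\subseteq I:\sum_{i\in S}w_i\le C\}$. Condition (i) is immediate: removing items from a feasible $S$ can only decrease $\WEIGHT(S)$, so every $S'\subseteq S$ also satisfies the capacity constraint. For condition (ii), whether the update at step $t+1$ is an insertion or a deletion, given $S\in \SOL(I_t)$ and $I_{t+1}=I_t\triangle\{i_{t+1}\}$, the set $S\setminus\{i_{t+1}\}$ is a subset of $I_{t+1}$ whose weight is at most $\WEIGHT(S)\le C$, so $S\setminus\{i_{t+1}\}\in\SOL(I_{t+1})$. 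In particular, both the static and the dynamic versions of \textsc{Knapsack}/\textsc{SubsetSum} fit the framework.

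Next I would plug Jin's FPTAS into Theorem~\ref{ChoosingFW}. For any precision $\epsilon_1>0$, Jin's algorithm is an $\alpha$-approximation with $\alpha=1-\epsilon_1$, and running the framework with parameter $\epsilon_2$ yields an online algorithm whose competitive ratio is $(1-\epsilon_2)(1-\epsilon_1)$ and whose amortized migration factor is $O(1/(\epsilon_2(1-\epsilon_1)))$. Setting $\epsilon_1=\epsilon_2=\epsilon/2$ produces competitive ratio $(1-\epsilon/2)^2=1-\epsilon+\epsilon^2/4\ge 1-\epsilon$ and migration $O(1/\epsilon)$, which is exactly the claim. Since \textsc{SubsetSum} is the weight-equals-profit special case, the same bounds transfer to it.

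There is essentially no deep obstacle: the work lies entirely in checking (i) and (ii) and in the bookkeeping rescaling that absorbs the quadratic loss $(1-\epsilon/2)^2$ into a clean $1-\epsilon$. Both steps are routine and follow the template already used in the proof of Theorem~\ref{MainResults}; the heavy lifting has been done once and for all in Theorem~\ref{ChoosingFW}.
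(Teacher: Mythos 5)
Your proposal is correct and follows essentially the same route as the paper: verify that \textsc{Knapsack}/\textsc{SubsetSum} satisfies the two conditions of Definition~\ref{Def:ChoosingP} (the paper does this in the text immediately preceding the theorem), then apply Theorem~\ref{ChoosingFW} with Jin's FPTAS and reparametrize $\epsilon$ by a factor of $1/2$ to absorb the $(1-\epsilon)^2$ loss while keeping the migration at $O(1/\epsilon)$. The only cosmetic difference is that you set both precision parameters to $\epsilon/2$ up front rather than computing $(1-\epsilon)^2$ first and rescaling afterwards; the bounds are identical.
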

\begin{proof}
By applying the given FPTAS with Theorem~\ref{ChoosingFW} we achieve an online
algorithm with a competitive rate of $(1-\epsilon)^2= 1-2\epsilon +\epsilon^2 $. We can therefore
apply the framework with $\epsilon' = 1/2 \epsilon$ yielding the desired solution quality. The required migration of our
framework is bounded by $O(\frac{1}{\epsilon (1-\epsilon)}) =O(1/\epsilon)$. 
\end{proof}

The $d$-dimensional \textsc{Knapsack} trivially falls in the class of choosing problems similar to its special case for $d=1$. For this problem, we use the PTAS from Caprara \emph{et al.} \cite{MultiDimKnapsackCapra}. As with the previous result we gain a similar result.

\begin{theorem}
For any fixed $d\in \mathbb{N}$, for the $d$-dimensional \textsc{Knapsack} problem there exists an online algorithm with  competitive ratio  $1-\epsilon$ and  migration factor  $O(1/\epsilon)$.
\end{theorem}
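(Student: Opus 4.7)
The plan is to follow exactly the same two-step recipe that was used for the one-dimensional \textsc{Knapsack} proof, since everything only needs to be checked ``dimension by dimension''. First, I would verify that $d$-dimensional \textsc{Knapsack} is a choosing problem in the sense of Definition~\ref{Def:ChoosingP}. The set of feasible solutions consists of subsets $S$ of the given items such that $\sum_{i\in S} w_i \le C$ holds coordinate-wise in $\mathbb{N}^d$. Property~(i) is immediate: removing an item from $S$ decreases every coordinate of $\sum_{i\in S} w_i$, so the capacity constraint stays satisfied in every dimension. Property~(ii) is equally easy: when the new item $i_{t+1}$ arrives or departs, the set $S\setminus\{i_{t+1}\}$ consists only of items present in $I_{t+1}$ and its coordinate-wise weight sum is still bounded by $C$ in $\mathbb{N}^d$.

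Next, I would invoke the PTAS of Caprara \emph{et al.}~\cite{MultiDimKnapsackCapra} as the offline algorithm $A_{\textrm{off}}$: for any fixed $d$ and any fixed $\delta>0$ it returns, in polynomial time, a feasible packing whose profit is at least $(1-\delta)\OPT(I)$. Plugging $A_{\textrm{off}}$ with approximation ratio $\alpha=1-\delta$ into Theorem~\ref{ChoosingFW} yields a semi-online algorithm that is $(1-\epsilon)(1-\delta)$-competitive with amortized migration factor $O(1/(\epsilon(1-\delta)))$.

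Finally, to achieve the clean statement ``$(1-\epsilon)$-competitive with migration $O(1/\epsilon)$'', I would pick both parameters to be $\epsilon/2$: setting $\delta = \epsilon/2$ and running the framework with $\epsilon' = \epsilon/2$ gives competitive ratio $(1-\epsilon/2)^2 = 1 - \epsilon + \epsilon^2/4 \ge 1-\epsilon$, while the migration factor is $O(1/(\epsilon'(1-\delta))) = O(1/\epsilon)$ since $1-\epsilon/2 \ge 1/2$. This is the identical rescaling already used in the proofs of Theorem~\ref{MainResults} and the one-dimensional \textsc{Knapsack} corollary above, so nothing new is really required.

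There is no genuine obstacle: the only points that one might worry about are the two structural properties of Definition~\ref{Def:ChoosingP}, and these carry over verbatim from the scalar case to the vector case because the constraint ``$\sum w_i \le C$'' is preserved under taking subsets in every coordinate independently. Consequently the theorem follows directly from Theorem~\ref{ChoosingFW} applied to the Caprara \emph{et al.}\ PTAS, for any fixed dimension~$d$.
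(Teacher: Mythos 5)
Your proposal is correct and follows essentially the same route as the paper: verify that $d$-dimensional \textsc{Knapsack} is a choosing problem (which the paper dismisses as trivial by analogy with the $d=1$ case), plug the PTAS of Caprara \emph{et al.} into Theorem~\ref{ChoosingFW}, and rescale $\epsilon$ to absorb the $(1-\epsilon)^2$ loss while keeping the migration factor at $O(1/\epsilon)$. The only difference is that you spell out the coordinate-wise verification of properties (i) and (ii) explicitly, which the paper omits.
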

% \begin{proof}
% Using the PTAS from Caprara \emph{et al.} for a respective parameter $\epsilon'$ theorem \ref{ChoosingFW} yields the respective result.
%  \end{proof}

The geometric variants of, for example \textsc{2DGeoKnapsack}, add another layer to the problem by packing geometric objects into some actual knapsack.
In our framework however, this task is simply delegated to the respective offline algorithm. Therefore we can use the $\frac{9}{17} - \epsilon$ approximative algorithm from G{\'{a}}lvez \emph{et al.} \cite{2DGeoKnapsack} for the two-dimensional version of this problem that we denoted as \textsc{2DGeoKnapsack}.

\begin{theorem}
For the \textsc{2DGeoKnapsack} problem there exists an online algorithm with a competitive ratio $9/17-\epsilon$ and migration factor $O(1/\epsilon)$.
\end{theorem}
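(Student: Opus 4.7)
The plan is to reduce the statement to a direct application of Theorem~\ref{ChoosingFW} combined with the offline $(9/17-\epsilon)$-approximation of G\'alvez \emph{et al.}~\cite{2DGeoKnapsack}, after checking that \textsc{2DGeoKnapsack} fits the choosing-problem framework and after rescaling the two $\epsilon$ parameters so that the final competitive ratio has the form $9/17-\epsilon$ with migration still in $O(1/\epsilon)$.

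The first step is to verify that \textsc{2DGeoKnapsack} is a choosing problem in the sense of Definition~\ref{Def:ChoosingP}. A feasible solution is a subset $S$ of axis-aligned rectangles together with a non-overlapping placement inside the $W\times H$ knapsack. For property (i), given any such $S$ and a subset $S'\subseteq S$, the inherited placement of the rectangles of $S'$ is still non-overlapping and still fits in $W\times H$, so $S'\in\SOL(I)$. For property (ii), when an item $i_{t+1}$ is added, $S\setminus\{i_{t+1}\}=S$ and remains a valid packing; when $i_{t+1}$ is removed, $S\setminus\{i_{t+1}\}$ is again a valid packing, since erasing a single rectangle from a non-overlapping placement preserves non-overlap. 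This justifies using the framework on \textsc{2DGeoKnapsack}.

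The second step is the parameter tuning. Let $\epsilon_A$ denote the accuracy of the offline algorithm, so that $A_{\mathrm{off}}$ has approximation ratio $\alpha=9/17-\epsilon_A$, and let $\epsilon_F$ denote the parameter plugged into Theorem~\ref{ChoosingFW}. The framework then yields competitive ratio
\begin{equation*}
(1-\epsilon_F)\cdot\alpha \;=\; (1-\epsilon_F)\Bigl(\tfrac{9}{17}-\epsilon_A\Bigr) \;\geq\; \tfrac{9}{17}-\epsilon_A-\tfrac{9}{17}\epsilon_F.
\end{equation*}
Choosing $\epsilon_A=\epsilon/2$ and $\epsilon_F=\tfrac{17}{9}\cdot\tfrac{\epsilon}{2}$ makes the right-hand side at least $9/17-\epsilon$, giving the claimed competitive ratio. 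For the migration factor, Theorem~\ref{ChoosingFW} gives $O(1/(\epsilon_F\cdot\alpha))$. Since $\alpha=\Theta(1)$ (bounded away from $0$ for small $\epsilon$) and $\epsilon_F=\Theta(\epsilon)$, this is $O(1/\epsilon)$.

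I do not expect a real obstacle here, because the heavy lifting is all done by Theorem~\ref{ChoosingFW} and by the offline approximation of~\cite{2DGeoKnapsack}; the only thing to be slightly careful about is that the offline algorithm already has its own $\epsilon$ in its approximation guarantee, so one must keep the framework's $\epsilon$ and the offline algorithm's $\epsilon$ logically separate and then rescale as above. Once this bookkeeping is done, the conclusion follows immediately in the same spirit as the proof of Theorem~\ref{MainResults} given in the appendix.
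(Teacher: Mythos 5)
Your proposal is correct and follows essentially the same route as the paper, which also obtains this result by feeding the offline approximation of G\'alvez \emph{et al.} into Theorem~\ref{ChoosingFW} and rescaling $\epsilon$ (the paper merely states that the \textsc{2DGeoKnapsack} case ``follows similarly'' to the $(1-\epsilon)$-approximation case and omits the details). Your explicit verification of the choosing-problem properties and the careful separation of the offline algorithm's $\epsilon_A$ from the framework's $\epsilon_F$ is a slightly more thorough write-up of exactly the argument the paper intends.
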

% \begin{proof}
% Using the $\frac{9}{17} - \epsilon$ approximation in our framework results in an algorithm that will have a competitive ratio of $(1-\epsilon)(\frac{9}{17} - \epsilon) \ge (\frac{9}{17}-O(\epsilon))$ and thus using $\epsilon' \in O(\epsilon)$ yields the desired ratio. Similarly we end up with a migration factor of  $O(1/(\epsilon\cdot(\frac{9}{17} - \epsilon)))  = O(1/\epsilon) $ since by choice of $\epsilon$ we have that $(\frac{9}{17} - \epsilon) \ge \frac{1}{17}$.
% \end{proof}

% \subsection{Maximum independent set and maximum disjoint set}
% \todo{MK: Add geometric results. is there more to add?}
% \todo{KG:}

% In the setting of the \textsc{MaximumIndependentSet} and \textsc{MaximumDisjointSet} problems we can again easily see that both problems are choosing problems. That is, if we consider a independent or disjoint set $S \subseteq I$, then every subset $S' \subseteq S$ is still a independent or respectively disjoint set, even in the future where some nodes or objects may be removed or added to the instance.

% \begin{theorem}
% 	For the \textsc{MaximumIndependentSet} problem on planar graphs there exists a robust PTAS.
% \end{theorem}
% \begin{proof}
% 	By using the PTAS with theorem \ref{ChoosingFW} with respective re-parametrization we achieve a robust PTAS.
% \end{proof}

% \todo{MK: Add geometric results}

\subsection{Maximum independent set problems with incoming nodes}

As for \textsc{MaximumIndependentSet} being also regarded as a choosing problem,
we can firstly confirm that the properties from Section~\ref{sec:choosing-problems} hold.
That is, if we consider a problem instance $I$ and an independent set $S \subseteq I$,
then every subset $S' \subseteq S$ is still a set of non-adjacent nodes
and therefore remains an independent set.
Furthermore, a solution $S$ to the instance $I_t$ stays feasible for the next instance $I_{t+1}$
since no new edges connecting already present nodes are added.
In the following, we describe
how our framework can be applied to some variants of \textsc{MaximumIndependentSet}.

Generally, when speaking about the unweighted variants,
we assume the profit of every node (or item) in the instance to be 1.

\paragraph*{Maximum independent set for planar graphs}

While the maximum independent set is difficult to approximate~\cite{DBLP:journals/jacm/AroraLMSS98},
we limit ourselves to certain graph classes. 
This gives us the opportunity again to apply our framework on some efficient offline algorithms.
One graph class that is studied well in this context is the class of planar graphs,
that contains graphs that can be drawn on the plane without edge-intersections.
For \textsc{MaximumIndependentSet} in planar graphs,
we will use the PTAS by Baker~\cite{Baker1983} to achieve a robust PTAS similar to \textsc{Knapsack}.

\begin{theorem}
	For the \textsc{MaximumIndependentSet},  there exists an online algorithm with competitive ratio $1-\epsilon$ and  migration factor  $O(1/\epsilon)$.
\end{theorem}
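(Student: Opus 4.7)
The plan is to apply the choosing-problem framework of Theorem~\ref{ChoosingFW} directly to Baker's PTAS~\cite{Baker1983} for \textsc{MaximumIndependentSet} on planar graphs. First I would confirm the prerequisites of Definition~\ref{Def:ChoosingP}: a feasible solution is an independent set $S\subseteq V$, and for any such $S$ and any $S'\subseteq S$ the set $S'$ remains an independent set, so property (i) holds. For property (ii), when a new node $v$ arrives together with its incident edges, removing $v$ from a previous independent set $S\in\SOL(I_t)$ yields $S\setminus\{v\}$, which is still pairwise non-adjacent in $I_{t+1}$ (no new edges are introduced between the old nodes); symmetrically, if a node departs, any independent set trivially restricts to an independent set of the smaller graph. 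Hence planar \textsc{MaximumIndependentSet} is a choosing problem in the sense of Section~\ref{sec:choosing-problems}.

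Next I would invoke Baker's PTAS as the offline ingredient $A_{\text{off}}$. Since the input graphs arising at every time step remain planar (node arrivals/departures preserve planarity), Baker's algorithm is applicable throughout the online execution and delivers an $\alpha=(1-\epsilon')$-approximation for any desired $\epsilon'>0$. Plugging this into Theorem~\ref{ChoosingFW} gives a semi-online algorithm with competitive ratio $(1-\epsilon')^2 = 1 - 2\epsilon' + \epsilon'^2$ and amortized migration factor $O(1/(\epsilon'(1-\epsilon')))$.

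To obtain the target guarantee, I would set $\epsilon' = \epsilon/2$. Then the competitive ratio is at least $1-2\epsilon'=1-\epsilon$, while the migration factor simplifies to $O(1/(\epsilon(1-\epsilon/2)))=O(1/\epsilon)$, matching the statement. This mirrors exactly the bookkeeping already carried out in the proof of Theorem~\ref{MainResults}, so no new ideas beyond the framework are required.

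The main (and essentially only) conceptual point is the verification that the problem stays within the choosing-problem framework as the graph evolves; the rest is a direct substitution into Theorem~\ref{ChoosingFW}. I do not anticipate a serious obstacle, since node arrivals never invalidate independence of existing sets and Baker's PTAS imposes no structural assumption beyond planarity, which is preserved along the entire online sequence.
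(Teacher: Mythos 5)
Your proposal is correct and follows essentially the same route as the paper: verify that planar \textsc{MaximumIndependentSet} satisfies the two choosing-problem properties, plug Baker's PTAS into Theorem~\ref{ChoosingFW}, and reparametrize with $\epsilon' = \epsilon/2$ to absorb the $(1-\epsilon')^2$ loss, exactly as in the paper's proof of Theorem~\ref{MainResults}. Your explicit remark that planarity is preserved under node arrivals is a small but welcome addition the paper leaves implicit.
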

%\begin{proof}
%	Similar to \textsc{Knapsack},
%	by applying the PTAS of Baker~\cite{Baker1983} with Theorem~\ref{ChoosingFW},
%	we achieve an online algorithm with a maintaining ratio of $1-\epsilon$.
%	By definition of maintaining ratio we will always achieve a $(1-\epsilon)^2= 1-2\epsilon +\epsilon^2 $ competitive solution,
%	hence applying the framework with $\epsilon' = 1/2 \epsilon$ yields the correct solution quality.
%	The required migration of our framework is bounded by $O(2 + \epsilon + \frac{1}{\epsilon}) = O(\frac{1}{\epsilon})$. 
%\end{proof}

\paragraph*{Maximum disjoint set for unit disks}

%The class of unit-disk graphs is also considered for \textsc{MaximumIndependentSet}. Erlebach, Jansen and Seidel came up with a PTAS for \textsc{MaximumIndependentSet} on (weighted) unit-disk graphs ~\cite{Erlebach2005}, which we will use to achieve a respective robust PTAS.

An instance for the \textsc{MaximumDisjointSet} problem can be easily transferred into a \textsc{MaximumIndependentSet} instance
by replacing every disk by a node and join two nodes by an edge if the corresponding disks are overlapping.
Also \textsc{MaximumDisjointSet} fulfils the aforementioned properties.
That is, if we consider a disjoint set $S \subseteq I$,
then every subset $S' \subseteq S$ is still a set of non-overlapping items and therefore a disjoint set.
Furthermore a solution $S$ to an instance $I_t$ stays feasible for the next instance $I_{t+1}$ since the position of items in $S$ is not changing and therefore remains a disjoint set.

For the offline algorithm solving \textsc{MaximumDisjointSet},
we follow the approach of Erlebach, Jansen and Seidel~\cite{Erlebach2005}
which gives us an PTAS for the offline (weighted) \textsc{MaximumIndependentSet} in disk graphs.

\begin{theorem}
	For the \textsc{MaximumIndependentSet} problem on (weighted) unit-disk graphs there exists a robust PTAS.
\end{theorem}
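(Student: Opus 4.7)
The plan is to derive this result as a direct instantiation of the general framework of Theorem~\ref{ChoosingFW}, in exactly the same style as the proof of Theorem~\ref{MainResults}. I would not attempt any separate online analysis tailored to disk graphs; the point of the framework is precisely to outsource all problem-specific work to the offline approximation algorithm.

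First I would verify that the (weighted) \textsc{MaximumIndependentSet} problem on unit-disk graphs, or equivalently the \textsc{MaximumDisjointSet} problem on unit disks, is a choosing problem in the sense of Definition~\ref{Def:ChoosingP}. Both requirements are already argued in the paragraph immediately preceding the theorem statement: any subset of a pairwise non-overlapping set of disks is still pairwise non-overlapping, giving property~(i); and the arrival or departure of a single disk does not alter the overlap relation among the disks already present, so discarding the affected disk from the current independent set leaves the remainder independent, giving property~(ii). The profit of each node is used both as its value and as its migration potential, as prescribed by Definition~\ref{Def:ChoosingP}.

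Next, for any target accuracy $\epsilon>0$, I would plug in the offline PTAS of Erlebach, Jansen and Seidel~\cite{Erlebach2005} for weighted independent set on disk graphs, run with parameter $\epsilon':=\epsilon/2$, as the offline subroutine $A_{\mathrm{off}}$. This gives an approximation ratio $\alpha = 1-\epsilon/2$. Invoking Theorem~\ref{ChoosingFW} then produces an online algorithm whose competitive ratio is $(1-\epsilon')\alpha = (1-\epsilon/2)^2 \geq 1-\epsilon$ and whose amortized migration factor is $O\!\bigl(1/(\epsilon'\alpha)\bigr) = O\!\bigl(1/(\epsilon(1-\epsilon/2))\bigr) = O(1/\epsilon)$ for all sufficiently small $\epsilon$. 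Since this construction works for every $\epsilon>0$, it is precisely a robust PTAS.

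The proof is essentially mechanical; there is no real obstacle beyond two bookkeeping points. The first is the rescaling $\epsilon\mapsto\epsilon/2$, needed to absorb the quadratic loss $(1-\epsilon)^2 = 1-2\epsilon+\epsilon^2$ coming from composing the framework's $(1-\epsilon)$-loss with the offline PTAS's $(1-\epsilon)$-approximation. The second is the observation that $\alpha$ is bounded away from zero (say $\alpha\geq 1/2$ for $\epsilon\leq 1$), so that the factor $1/\alpha$ can be absorbed into the constant hidden by the $O(\cdot)$ in the migration bound. No ingredient beyond Theorem~\ref{ChoosingFW} and the cited offline PTAS is required.
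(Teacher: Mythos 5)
Your proposal is correct and follows exactly the paper's own route: verify the choosing-problem properties for disjoint unit disks, plug the offline PTAS of Erlebach \emph{et al.}~\cite{Erlebach2005} into Theorem~\ref{ChoosingFW}, and rescale $\epsilon$ to absorb the $(1-\epsilon)^2$ loss while noting the migration bound $O\bigl(1/(\epsilon(1-\epsilon))\bigr)=O(1/\epsilon)$. This matches the argument the paper gives for Theorem~\ref{MainResults} and its \textsc{Knapsack} instantiation verbatim in structure.
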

%\begin{proof}
%	By using the PTAS with theorem \ref{ChoosingFW} with respective re-parametrization we achieve a robust PTAS.
%\end{proof}

\paragraph*{MaximumDisjointSet for pseudo-disks}

For the \textsc{MaximumDisjointSet} problem limited to pseudo-disks,
we have another PTAS result in the offline world
introduced by Chan and Har-Peled~\cite{Chan2009}.
Using their PTAS,
we get yet another robust PTAS in the dynamic online world.

\begin{theorem}
	For the \textsc{MaximumDisjointSet} problem on pseudo-disks there exists a robust PTAS.
\end{theorem}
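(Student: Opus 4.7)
The plan is to reduce the theorem to Theorem~\ref{ChoosingFW} via the offline PTAS of Chan and Har-Peled~\cite{Chan2009} for pseudo-disks, in direct analogy with the unit-disk case handled just above. First I would verify that \textsc{MaximumDisjointSet} restricted to pseudo-disks fits Definition~\ref{Def:ChoosingP}. The objects are pseudo-disks with associated profits, and a feasible solution is a subset of pairwise non-overlapping pseudo-disks. Property (i) is immediate, since any subset of a pairwise non-overlapping family is still pairwise non-overlapping. Property (ii) holds because the geometric embedding of every pseudo-disk is fixed throughout the instance, so neither the arrival nor the removal of a single pseudo-disk $i_{t+1}$ can retroactively introduce an overlap between two pseudo-disks that remain in $S \setminus \{i_{t+1}\}$.

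Next I would invoke Theorem~\ref{ChoosingFW} with the Chan--Har-Peled PTAS as the offline algorithm $A_{\mathrm{off}}$. For any fixed $\epsilon' > 0$ this PTAS produces, in polynomial time, a disjoint subset of profit at least $(1-\epsilon')\OPT$. Plugging $\alpha = 1-\epsilon'$ into the framework yields a semi-online algorithm with competitive ratio $(1-\epsilon)(1-\epsilon')$ and amortized migration factor $O\bigl(1/(\epsilon(1-\epsilon'))\bigr)$. Exactly as in the proof of Theorem~\ref{MainResults}, one then substitutes $\epsilon' = \epsilon/2$ and rescales the framework parameter to $\epsilon/2$; since $(1-\epsilon/2)^2 = 1 - \epsilon + \epsilon^2/4 \geq 1-\epsilon$, this delivers competitive ratio at least $1-\epsilon$ together with migration factor $O(1/\epsilon)$, proving the claim.

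I do not expect a real obstacle here: all of the geometric work is already packed inside the offline PTAS of Chan and Har-Peled, and the only thing that has to be checked carefully is property (ii) for the dynamic setting in which pseudo-disks may both arrive and depart. Because positions of already present pseudo-disks are never altered, that check is immediate, and the rest is a mechanical invocation of the framework and a textbook rescaling of $\epsilon$.
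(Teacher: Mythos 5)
Your proposal is correct and follows essentially the same route as the paper: verify that \textsc{MaximumDisjointSet} on pseudo-disks satisfies properties (i) and (ii) of Definition~\ref{Def:ChoosingP} (which the paper checks for \textsc{MaximumDisjointSet} in the unit-disk paragraph and reuses here), then plug the Chan--Har-Peled PTAS into Theorem~\ref{ChoosingFW} and rescale $\epsilon$ exactly as in the proof of Theorem~\ref{MainResults}. No discrepancies to report.
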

%\begin{proof}
%	By using the PTAS with theorem \ref{ChoosingFW} with respective re-parametrization we achieve a robust PTAS.
%\end{proof}

% We can generally see that any choosing problem can be solved with almost the same quality as in the offline world, using migration in relation of how close we want to be to the offline solution. In particular we can use any PTAS result to achieve a respective robust PTAS. Before we discuss whether the respective amount of migration is necessary for these kinds of problems we also want to take a look at some similar problems, that do not fall in the choosing problem class.

\section{Upper bounds for non-choosing problems}

There are many maximization problems that do not fit neatly under the umbrella
of choosing problems. However, it turns out that there is a wider class of problems that allow a similar treatment. In this section, we examine two problems that are not choosing problems (at least as defined in this paper): \textsc{MultipleKnapsack} and \textsc{MaximumIndependentSet} in weighted planar graphs with new edges added in each time step. The first problem is not a choosing problem essentially because it requires the selection of \emph{multiple} subsets of a given instance, not just one. The second problem is not a choosing problem because each time steps adds a new \emph{edge} to the graph instead of a \emph{vertex}.

\subsection{\textsc{MultipleKnapsack}}

It is known that the \textsc{MultipleKnapsack} problem has an offline PTAS~\cite{DBLP:journals/siamcomp/Jansen09,DBLP:conf/sofsem/Jansen12,DBLP:journals/siamcomp/ChekuriK05,DBLP:conf/random/Kellerer99}, which we will denote by $A_\text{off}$ in this subsection. On the basis of this PTAS, given $\epsilon$, we define the polynomial-time online algorithm $A^{\textsc{mk}}_\text{on}$ above.

\begin{lstlisting}[caption={Online algorithm $A^{\textsc{mk}}_{\textrm{on}}$ for \textsc{MultipleKnapsack}},backgroundcolor = \color{lightgray},label=list:8-6,captionpos=t,float,abovecaptionskip=-\medskipamount, escapeinside={(*}{*)},basicstyle=\ttfamily\linespread{1.15}\footnotesize]
compute a packing (*$S_1$*) for the first instance (*$I_1$*) using (*$A_\text{off}$*);
set (*$\Delta := 0$*) and (*$P := \PROFIT(S_1)$*);
for each time step (*$t-1 \to t$*) with (*$t > 1$*) and new item (*$i$*):
  set (*$S_t := S_{t-1}$*);
  increment (*$\Delta$*) by the profit (*$p_i$*) of (*$i$*);
  if (*$(\dagger) \ \Delta > \epsilon P$*):
    do a repacking with (*$A_\text{off}$*) to obtain the new solution (*$S_t$*);
    set (*$\Delta := 0$*) and (*$P := \PROFIT(S_t)$*);
  endif;
\end{lstlisting}

\begin{theorem}
The algorithm $A^{\textsc{mk}}_\textnormal{on}$ computes $(1-\epsilon)$-approximate solutions.
\end{theorem}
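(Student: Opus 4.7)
My plan follows the same two-phase pattern as the proof of Theorem~\ref{ChoosingFW}: first bound how much $\OPT$ can grow between two consecutive repacking times, then combine this drift bound with the offline PTAS guarantee. For cleanliness, I will decouple the trigger threshold in $(\dagger)$ from the PTAS accuracy, running $A_\text{off}$ with parameter $\delta$ and interpreting the condition as $\Delta > \eta P$; at the end I choose $\delta=\eta=\epsilon/3$ (equivalently, feed $\epsilon/3$ into the algorithm of the listing) to realise the claimed $(1-\epsilon)$-guarantee.

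Fix a time step $t$ and let $t^{*}\le t$ denote the most recent repacking time. Between $t^{*}$ and $t$ the solution does not change, so $S_{t}=S_{t^{*}}$ and $P=\PROFIT(S_{t})$; let $N=I_{t}\setminus I_{t^{*}}$ collect the items that arrived in the interval $(t^{*},t]$. The crucial observation is a monotonicity property of \textsc{MultipleKnapsack}: removing items from any feasible $m$-knapsack assignment leaves it feasible, since capacity constraints can only relax. Restricting an optimal packing for $I_{t}$ to the items already present at $t^{*}$ therefore gives
\[
\OPT(I_{t^{*}}) \;\ge\; \OPT(I_{t})-\sum_{i\in N}p_{i} \;\ge\; \OPT(I_{t})-\eta P,
\]
where the second inequality holds because at step $t$ the trigger $(\dagger)$ has not fired, so the accumulator obeys $\sum_{i\in N}p_{i}=\Delta\le \eta P$.

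Combining this drift bound with the PTAS guarantee $P\ge(1-\delta)\OPT(I_{t^{*}})$ at the last repacking time yields
\[
P \;\ge\; (1-\delta)\bigl(\OPT(I_{t})-\eta P\bigr),
\]
and rearranging gives $P\ge\tfrac{1-\delta}{1+(1-\delta)\eta}\OPT(I_{t})$. With the choice $\delta=\eta=\epsilon/3$ this quantity is at least $1-\epsilon$, settling the case in which $t$ lies strictly between two repacking times. When $t$ is itself a repacking step, $S_{t}$ is the direct output of $A_\text{off}$ on $I_{t}$ and the PTAS guarantee alone gives $\PROFIT(S_{t})\ge(1-\delta)\OPT(I_{t})\ge(1-\epsilon)\OPT(I_{t})$.

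The only delicate point is the monotonicity step underlying the first display: it is essentially free for \textsc{MultipleKnapsack} because packings survive item deletions, but this is exactly the structural property that can fail outside the ``choosing'' regime, which is why the reduction to Section~\ref{sec:choosing-problems} has to be redone in this slightly ad-hoc form. No migration accounting is needed here, since the present theorem only concerns approximation quality; the amortized migration factor $O(1/\epsilon)$ would follow from a calculation completely analogous to the one in the proof of Theorem~\ref{ChoosingFW}.
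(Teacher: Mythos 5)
Your proof is correct and follows essentially the same route as the paper's: decompose time into phases ending at repacking steps, bound the drift of $\OPT$ between the last repacking time and the current step by the accumulated profit $\Delta \le \eta P$, and combine this with the offline PTAS guarantee before reparametrizing $\epsilon$. The only differences are cosmetic — you explicitly justify the drift inequality via monotonicity under item deletion (which the paper asserts without comment) and you solve for $P$ directly rather than bounding $P \le \OPT(I_{t'})$ first, but the underlying argument is identical.
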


\begin{proof}
Whenever a repacking took place, the resulting solutions are $(1-\epsilon)$-approximate. If $t'$ is any time point at which no repacking took place, let $t$ be the latest time point before $t'$ at which a repacking did take place. In this case, we see that
\begin{align*}
&A^{\textsc{mk}}_\text{on}(I_{t'})
= \PROFIT(I_t) 
\geq (1-\epsilon) \OPT(I_t) 
\geq (1-\epsilon) (\OPT(I_{t'}) - \Delta_{t:t'}) \geq \\
& (1-\epsilon) (\OPT(I_{t'}) - \epsilon \cdot \PROFIT(I_{t'})) 
\geq (1-\epsilon) (\OPT(I_{t'}) - \epsilon \cdot \OPT(I_{t'})) 
\geq\\
&(1-\epsilon)^2 \OPT(I_{t'}) 
\geq (1-2\epsilon) \OPT(I_{t'}). 
\end{align*}
Thus, we see that, after a linear reparametrization of $\epsilon$, the algorithm $A^{\textsc{mk}}_\text{on}$ computes $(1-\epsilon)$-approximate solutions.
\end{proof}

\begin{theorem}
The migration factor of $A^{\textsc{mk}}_\textnormal{on}$ is bounded by $O(1/\epsilon)$.
\end{theorem}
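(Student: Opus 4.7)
The plan is to exploit the fact that migration only happens at repacking steps, and to show that each such repacking is amortized against the profit of items that arrived since the previous repacking. Concretely, whenever no repacking takes place at time $t$, the algorithm sets $S_t := S_{t-1}$, so by the convention that the initial assignment of a newly arriving item is free, $\phi(S_{t-1} \to S_t) = 0$. Hence the entire migration cost up to any time is concentrated at the repacking times, and it suffices to bound the cost of a single repacking against the migration potential accumulated since the previous one.

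First, I would fix an arbitrary repacking time $t$ and let $t_0$ denote the previous repacking time (or the initial one, $t_0 = 1$). The migration cost $\phi(S_{t_0} \to S_t)$ is bounded by $\PROFIT(S_{t_0} \triangle S_t) \le \PROFIT(S_{t_0}) + \PROFIT(S_t) \le 2\,\OPT(I_t)$, since both $S_{t_0}$ and $S_t$ are feasible packings. Because only items arrive between $t_0$ and $t$, we have $\OPT(I_t) \le \OPT(I_{t_0}) + \Delta_{t_0:t}$, and the PTAS guarantee used at time $t_0$ yields $\OPT(I_{t_0}) \le \PROFIT(S_{t_0})/(1-\epsilon)$.

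The crucial step is then to invoke the repacking trigger $(\dagger)$: repacking at time $t$ means that the accumulated profit of incoming items since $t_0$ satisfies $\Delta_{t_0:t} > \epsilon \cdot P = \epsilon \cdot \PROFIT(S_{t_0})$. Plugging these three inequalities together,
\begin{align*}
\phi(S_{t_0} \to S_t) \;\le\; 2\,\OPT(I_t) \;&\le\; \tfrac{2}{1-\epsilon}\PROFIT(S_{t_0}) + 2\Delta_{t_0:t} \\
&\le\; \Bigl(\tfrac{2}{\epsilon(1-\epsilon)} + 2\Bigr)\Delta_{t_0:t} \;=\; O(1/\epsilon)\cdot \Delta_{t_0:t}.
\end{align*}
Summing this bound over all repacking intervals that occur up to any given time $T$ gives a total migration cost of at most $O(1/\epsilon)\cdot \Delta_{0:T}$, which is exactly the definition of amortized migration factor $O(1/\epsilon)$.

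The only subtle point I expect is the very first repacking (the call inside the initialization), where there is no predecessor solution and no accumulated $\Delta$ to amortize against; but here the cost is zero by the free-initial-assignment convention, so the edge case collapses. A secondary bookkeeping matter is the precise definition of $\phi$ for \textsc{MultipleKnapsack}, since items newly arrived between $t_0$ and $t$ that land in a knapsack for the first time should not be charged; the upper bound $\PROFIT(S_{t_0}) + \PROFIT(S_t)$ is easily seen to dominate the true migration cost, which is all we need. No other step requires anything beyond the inequalities above.
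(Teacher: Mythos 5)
Your proposal is correct and follows essentially the same route as the paper: decompose time into phases between consecutive repackings, bound the repacking cost by $O(\OPT(I_t))$, relate $\OPT(I_t)$ to $\OPT(I_{t_0})$ via the arrived profit, and amortize against the trigger condition $\Delta_{t_0:t} > \epsilon\,\PROFIT(S_{t_0})$ combined with the PTAS guarantee. The only (harmless) differences are that you use the slightly more careful bound $\phi \le \PROFIT(S_{t_0}) + \PROFIT(S_t) \le 2\,\OPT(I_t)$ where the paper writes just $\OPT(I_{t'})$, and you spell out the summation over phases and the free first repacking explicitly.
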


\begin{proof}
It suffices to show the bound for individual phases in order to show the bound for all time intervals. Let $t \to \cdots \to t'$ be a phase. This implies that no repackings took place except at $t-1 \to t$ and $t' - 1 \to t'$. Then
\begin{align*}
&\gamma_{t:t'}
\leq \frac{\OPT(I_{t'})}{\Delta_{t:t'}} 
\leq \frac{\OPT(I_t) + \Delta_{t:t'}}{\Delta_{t:t'}}= 
 \frac{\OPT(I_t)}{\Delta_{t:t'}} + 1 \leq \\
& \frac{\OPT(I_t)}{\epsilon \PROFIT(I_{t'})} + 1 
\leq \frac{\OPT(I_t)}{\epsilon (1-\epsilon) \OPT(I_t)} + 1 = O(1/\epsilon),
\end{align*}
which proves the upper bound.
\end{proof}

\subsection{Maximum independent set in weighted planar graphs with edge arrivals}

We are given a weighted planar graph $G = (V,E)$ and a sequence of subgraphs
$(G_t)_{t=1}^T$ with $G_t = (V,E_t)$, for $E_t \subseteq E$. Moreover, we assume
that $|E_t \setminus E_{t-1}| = 1$ for all $t=2,...,T$, i.e. every time step
adds exactly one edge to the graph. We interpret this sequence as an online
instance of the static case of \textsc{MaximumIndependentSet}. Given $\epsilon \in
(0,\tfrac 12)$, we aim to find a polynomial-time online algorithm that computes
$(1-\epsilon)$-approximate solutions with a migration factor  $\mathcal{O}(1/\epsilon)$, where the migration potential of an edge is given by the minimum of the weights of the two adjacent vertices. Note that every graph $G_t$ in this sequence is planar, hence we know that there is an offline PTAS $A_\text{off}$ that can find $(1-\epsilon)$-approximate independent sets (IS) in polynomial time~\cite{Baker1983}. Using this, we define the online algorithm $A^{\textsc{IS}}_\text{on}$ above.

\begin{lstlisting}[caption={Online algorithm $A^{\textsc{IS}}_{\textrm{on}}$ for \textsc{MaximumIndependentSet} in planar graphs},backgroundcolor = \color{lightgray},label=list:8-6,captionpos=t,float,abovecaptionskip=-\medskipamount, escapeinside={(*}{*)},basicstyle=\ttfamily\linespread{1.15}\footnotesize]
compute an IS (*$S_1$*) for the first graph (*$G_1$*) using (*$A_\text{off}$*);
set (*$\Delta := 0$*) and (*$W := \WEIGHT(S_1)$*);
for each time step (*$t-1 \to t$*) with (*$t > 1$*) and new edge (*$e$*):
  if (*$e$*) has at most one adjacent vertex in (*$S_{t-1}$*):
    set (*$S_t := S_{t-1}$*);
  else:
    remove the smaller of the two adjacent vertices of (*$e$*)
    from (*$S_{t-1}$*) to obtain (*$S_t$*);
  endif;
  increment (*$\Delta$*) by the minimum of the weights of the two
  adjacent vertices of (*$e$*);
  if (*$(\dagger) \ \Delta > \epsilon W$*):
    use (*$A_\text{off}$*) to obtain the new solution (*$S_t$*);
    set (*$\Delta := 0$*) and (*$W := \WEIGHT(S_t)$*);
  endif;
\end{lstlisting}

\begin{theorem}
The algorithm $A^{\textsc{IS}}_\textnormal{on}$ computes $(1-\epsilon)$-approximate solutions.
\end{theorem}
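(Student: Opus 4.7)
The plan is to mimic the calculation already carried out for \textsc{MultipleKnapsack}, but to account for the fact that now the solution can shrink between repackings (whenever a new edge has both endpoints in the current IS). The proof splits into the two standard cases according to whether $t'$ is itself a repacking step or lies strictly between two consecutive repackings.

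First I would dispose of the easy case. If condition $(\dagger)$ triggered at time $t'$, then $S_{t'}$ is the output of $A_\text{off}$ applied to the planar graph $G_{t'}$, so $\WEIGHT(S_{t'})\ge (1-\epsilon)\OPT(G_{t'})$ is immediate from Baker's PTAS guarantee. The substance of the proof lies in the other case, so let $t$ be the most recent time $\le t'$ at which a repacking occurred (the initialisation of $S_1$ counts). I would then collect four ingredients:
\begin{enumerate}
\item[(i)] $\WEIGHT(S_t)\ge (1-\epsilon)\OPT(G_t)$, again by the PTAS guarantee used at time $t$;
\item[(ii)] $\OPT(G_{t'})\le \OPT(G_t)$, because $G_{t'}$ only has additional edges relative to $G_t$, so every IS in $G_{t'}$ is also an IS in $G_t$;
\item[(iii)] $\WEIGHT(S_{t'})\ge \WEIGHT(S_t)-\Delta_{t:t'}$, and
\item[(iv)] $\Delta_{t:t'}\le \epsilon \WEIGHT(S_t)$, because otherwise the test $(\dagger)$ would have fired at some point strictly between $t$ and $t'$.
\end{enumerate}

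The key step, and the only one that is not purely bookkeeping, is (iii); this is also where the algorithmic choice of always discarding the \emph{smaller} endpoint of a new conflicting edge pays off. For every arriving edge $e=(u,v)$ between $t$ and $t'$, the counter $\Delta$ is incremented by $\min(w_u,w_v)$ regardless of whether a vertex is dropped, while $\WEIGHT(S)$ drops by exactly $\min(w_u,w_v)$ in the ``both endpoints in $S$'' case and by $0$ otherwise. Summing over all edges arriving in $(t,t']$, the total weight lost from the IS is bounded by $\Delta_{t:t'}$, which is precisely (iii). This is the crux of the argument and is really what forced the algorithm to use the ``remove the smaller endpoint'' rule rather than an arbitrary choice.

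Combining (i)--(iv), I would finish with the short chain
\begin{equation*}
\WEIGHT(S_{t'})\ \ge\ \WEIGHT(S_t)-\Delta_{t:t'}\ \ge\ (1-\epsilon)\WEIGHT(S_t)\ \ge\ (1-\epsilon)^2\OPT(G_t)\ \ge\ (1-2\epsilon)\OPT(G_{t'}),
\end{equation*}
and then absorb the factor $2$ into $\epsilon$ by replacing $\epsilon$ with $\epsilon/2$ at the outset, exactly as in the proof given for \textsc{MultipleKnapsack}. I do not anticipate a serious technical obstacle: step (iii), which depends on the specific vertex-removal rule, is the only place where any care is needed, and once (iii) is in hand the rest is a straightforward telescoping together with the planarity-PTAS guarantee.
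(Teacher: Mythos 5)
Your proposal is correct and follows essentially the same argument as the paper: the identical chain $\WEIGHT(S_{t'})\ge \WEIGHT(S_t)-\Delta_{t:t'}\ge(1-\epsilon)\WEIGHT(S_t)\ge(1-\epsilon)^2\OPT(G_t)\ge(1-\epsilon)^2\OPT(G_{t'})\ge(1-2\epsilon)\OPT(G_{t'})$ followed by reparametrizing $\epsilon$. Your step (iii), justifying the weight-loss bound via the ``remove the smaller endpoint'' rule, is a detail the paper leaves implicit, and making it explicit is a welcome but not substantively different addition.
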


\begin{proof}
Clearly, the approximation ratio is satisfied whenever a repacking took place. Thus, we let $t'$ be a time at which no repacking took place and $t$ the latest time before $t'$ where a repacking did take place. We see that
\begin{equation*}
\begin{aligned}
&A^{\textsc{IS}}_\text{on}(G_{t'})
= \WEIGHT(S_{t'}) 
\geq \WEIGHT(S_t) - \Delta_{t:t'} 
\geq_{(\dagger) } (1-\epsilon) \WEIGHT(S_t)\geq \\
& (1-\epsilon)^2 \OPT(G_t) \geq (1-\epsilon)^2 \OPT(G_{t'}) \geq (1-2\epsilon) \OPT(G_{t'}).
\end{aligned}
\end{equation*}
Thus, we have proven that, after a linear reparametrization of $\epsilon$, the algorithm $A^{\textsc{IS}}_\text{on}$ computes $(1-\epsilon)$-approximate solutions.
\end{proof}

\begin{theorem}
The migration factor of $A^{\textsc{IS}}_\textnormal{on}$ is bounded by $O(1/\epsilon)$.
\end{theorem}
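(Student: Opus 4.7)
The plan is to bound the migration factor phase by phase, where a \emph{phase} is a maximal interval between consecutive executions of $A_\text{off}$ (repackings). Let $t$ and $t'$ be two consecutive repacking times. As in the proof for \textsc{MultipleKnapsack} just above, it suffices to show that the migration incurred in $(t, t']$ is at most $O(1/\epsilon)\cdot \Delta_{t+1:t'}$; summing over all completed phases and absorbing the trailing incomplete phase (which contributes only small migrations) into the next $\Delta$ then yields the claim at every time step.

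Within a single phase I would decompose the migration into two sources. The \emph{small migrations} at times $t+1, \dots, t'$ delete, for each arriving edge whose two endpoints both lie in the current independent set, the cheaper of the two endpoints. Since by definition the migration potential of such an edge equals the minimum of its two endpoint weights, the cost of each such deletion matches the potential of the corresponding edge exactly, so the total small-migration cost in the phase is at most $\Delta_{t+1:t'}$. The single \emph{repacking migration} at time $t'$ can be bounded by the triangle-style inequality $\phi(S_{t'-} \to S_{t'}) \le \WEIGHT(S_{t'-}) + \WEIGHT(S_{t'})$, where $S_{t'-}$ denotes the solution just before $A_\text{off}$ is invoked at time $t'$.

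To bound these two weights, write $W := \WEIGHT(S_t)$. Since within a phase the algorithm only ever removes vertices from its incumbent, $\WEIGHT(S_{t'-}) \le W$. Adding edges cannot increase the optimum, so $\OPT(G_{t'}) \le \OPT(G_t)$, and the $(1-\epsilon)$-approximation guarantee of $A_\text{off}$ at time $t$ gives $\OPT(G_t) \le W/(1-\epsilon)$; hence $\WEIGHT(S_{t'}) \le W/(1-\epsilon)$ as well. The repack trigger $(\dagger)$ at time $t'$ guarantees $\Delta_{t+1:t'} > \epsilon W$. Combining these three bounds, the per-phase migration is at most $\Delta_{t+1:t'} \bigl(1 + 2/(\epsilon(1-\epsilon))\bigr) = O(1/\epsilon)\cdot \Delta_{t+1:t'}$, and summing over phases gives the global bound.

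The main technical point I expect to be subtle is recognising that the per-edge deletions carried out inside a phase are effectively \emph{free}: they never inflate the migration factor because the algorithm was crafted precisely so that the removal cost of each such deletion equals the migration potential of the edge that triggered it. Once this is pinpointed, the $O(1/\epsilon)$ blow-up is solely attributable to the single repacking step at the end of each phase, and the analysis reduces to the same chain of inequalities used for \textsc{MultipleKnapsack}.
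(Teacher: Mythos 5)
Your proof is correct and follows essentially the same phase-by-phase argument as the paper: bound the repacking cost at the end of a phase by the weights of the outgoing and incoming solutions (each at most $\OPT(G_t) \le \WEIGHT(S_t)/(1-\epsilon)$, using that edge arrivals only shrink the optimum) and divide by the trigger threshold $\epsilon \cdot \WEIGHT(S_t)$. You are in fact a bit more careful than the paper's two-line computation, since you explicitly account for the intra-phase vertex deletions, which only contribute an additive $+1$ to the ratio because each deletion costs exactly the migration potential of the edge that triggered it.
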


\begin{proof}
To show the upper bound on the total migration factor, it suffices to show the upper bound for individual phases. To that end, let $t \to \cdots \to t'$ be a phase. In this case, we have
\begin{align*}
&\gamma_{t:t'}
\leq \frac{2 \cdot \OPT(I_t)}{\Delta_{t:t'}} 
\leq \frac{2 \cdot \WEIGHT(S_t)}{(1-\epsilon) \cdot \epsilon \cdot \WEIGHT(S_t)}
= O(1/\epsilon).
\end{align*}
This already implies the upper bound on the total migration factor.
\end{proof}

%%%%%%%%%%%%%%%%%%%%%%%%%%%%%%%%%%%%%%%%%%%%%%%%%%%%%%%%%%%%%%%%%%%%%%%%
\section{Lower bounds and inapproximability results}
%%%%%%%%%%%%%%%%%%%%%%%%%%%%%%%%%%%%%%%%%%%%%%%%%%%%%%%%%%%%%%%%%%%%%%%%

We will now complement our positive results by showing how much migration is necessary for some of these problems.

\subsection{General lower bounds for choosing problems}

We will begin for now with the dynamic version of these problems, where objects are added but may also be removed.
In this setting, the adversary is quite powerful because removing objects only leaves the option to repack our solution while the range of possibilities has become smaller.
This scenario is especially difficult when our current solution becomes inefficient and we might have to change the whole leftover solution.
An adversary can make sure that we might need to switch between two or more different solutions making a certain amount of migration necessary.
In order to prove our lower bounds of necessary migration, we want to create such a scenario.
In order to enforce switching between two solutions, we will use two independent instances, which we will define as \textit{alternating instances}:

\begin{definition}\label{def:alter}
Consider some dynamic online choosing problem $\Pi_{\textrm{on}}$, two instances $I_1$ and $I_2$, and some desired competitive ratio $\beta<1$.
We call the instances $I_1$ and $I_2$ alternating instances when there exists $I'_1 \subset I_1$,
such that the following properties hold:
\begin{itemize}
%\item $I'_1 \subsetneq I_1$ and $I_1 \cap I_2 = \emptyset$ and $\beta \phi(I_2) \in \Omega(\gamma)$.
\item The solutions $S_1 =I_1$ and $S_2=I_2$ are feasible and henceforth also $S'_1 =I'_1$ is feasible, but any solution $S$ with $S \cap I_1 \neq \emptyset$ and $S \cap I_2 \neq \emptyset$ is infeasible.
\item We have that $\PROFIT(I'_1)< \beta \PROFIT(I_2)< \beta^2 \PROFIT(I_1)$.
\end{itemize}
 
\end{definition}

The idea behind these alternating instances is, that any algorithm that hopes to
be $\beta$-competitive can be forced to alternate between solutions of both
instances, when the adversary simply adds all objects from $I_1$ and $I_2$ and then continues to remove and add again all items from $I_1 \backslash I'_1$. We can further see that if the necessary migration for the solutions of $I'_1$ and $I_2$ is large but the migration potential from $I_1 \backslash I'_1$ is small, that this increases the ratio between necessary migration and migration potential.

\begin{lemma}
Let $\Pi_{\textrm{on}}$ be a dynamic online choosing problem and consider two alternating instances $I_1,I_2$ for a desired competitive rate $\beta<1$ with $I'_1\subseteq I_2$ fulfilling the requirements of definition \ref{def:alter}. When we additionally have that $\beta \PROFIT(I_2) \in \Omega(\gamma)$ for some desired migration factor $\gamma$ and $\PROFIT(I_1) - \PROFIT(I'_1) \le c$ for some $c$, then any $\beta$-competitive algorithm requires a migration factor of $\Omega(\frac{\gamma}{2(c+1)})$.
\end{lemma}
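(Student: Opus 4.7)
The plan is to exhibit an explicit adversarial sequence of arrivals and departures based on $I_1$ and $I_2$ that forces any $\beta$-competitive algorithm to perform many costly repackings while enjoying only little migration potential. Concretely, I would first let the adversary insert all objects of $I_1$ and $I_2$ (which costs no migration, since initial placements are free), and then repeat $k$ identical phases: remove the items in $I_1\setminus I_1'$, then re-add them. A single phase contributes migration potential at most $2(\PROFIT(I_1)-\PROFIT(I_1'))\le 2c$.

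Next, using the two defining properties of alternating instances, I would pin down what the algorithm's solution must look like at each stage. Whenever the current instance is $I_1\cup I_2$, $\OPT=\PROFIT(I_1)$; since $\PROFIT(I_2)<\beta\PROFIT(I_1)$, any $\beta$-competitive solution $S$ must satisfy $\PROFIT(S)\ge \beta\PROFIT(I_1)>\PROFIT(I_2)$, forcing $S\cap I_2=\emptyset$ and hence $S\subseteq I_1$. Whenever the current instance is $I_1'\cup I_2$, $\OPT=\PROFIT(I_2)$; since $\PROFIT(I_1')<\beta\PROFIT(I_2)$, the competitive solution must have profit exceeding $\PROFIT(I_1')$, which by feasibility (no solution can hit both $I_1$ and $I_2$) means $S\subseteq I_2$ with $\PROFIT(S)\ge\beta\PROFIT(I_2)$. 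Thus the targets before and after each removal/re-insertion lie in the disjoint ground sets $I_1$ and $I_2$.

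With that, I would lower bound the migration cost of a phase. Because consecutive solutions lie in disjoint item sets, $\phi(S\to S')=\PROFIT(S\triangle S')\ge \PROFIT(S')$, and because the items of $I_2$ only arrive once at the very beginning (never again), adding $I_2$-items to the solution always pays migration. Hence the switch $S\subseteq I_1 \to S'\subseteq I_2$ at the removal step costs at least $\beta\PROFIT(I_2)$, and the switch back $S'\subseteq I_2 \to S''\subseteq I_1$ at the re-insertion step costs at least $\beta\PROFIT(I_2)$ (counting only the removal of the $I_2$-items from the solution, which is charged regardless of how the algorithm exploits the newly arrived items of $I_1\setminus I_1'$). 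So each phase contributes at least $2\beta\PROFIT(I_2)$ to the cumulative migration cost and at most $2c$ to the migration potential.

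Putting the counts together, after $k$ phases the amortized migration factor is at least
\[
\frac{2k\,\beta\PROFIT(I_2)}{\PROFIT(I_1)+\PROFIT(I_2)+2kc},
\]
which tends to $\beta\PROFIT(I_2)/c$ as $k\to\infty$. Using the hypothesis $\beta\PROFIT(I_2)\in\Omega(\gamma)$ together with $1/c\ge 1/(2(c+1))$ yields the claimed $\Omega(\gamma/(2(c+1)))$ lower bound on the migration factor. The main technical obstacle is the second step: ruling out clever hybrid or partial repackings. The alternating-instance definition is tailored for exactly this — any feasible solution must lie entirely in $I_1$ or entirely in $I_2$ — so once the profit-gap inequalities are invoked, no loophole remains and the lower bound drops out by direct accounting.
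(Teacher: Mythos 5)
Your proposal is correct and follows essentially the same route as the paper's proof: the identical adversarial sequence (insert $I_1$ and $I_2$, then repeatedly remove and re-add $I_1\setminus I_1'$), the same use of the profit-gap inequalities to force the solution to alternate between subsets of $I_1$ and subsets of $I_2$, and the same amortized accounting as the number of phases grows. The only cosmetic difference is that you charge both directions of each switch (giving $2\beta\PROFIT(I_2)$ per phase) where the paper conservatively charges only the exchange of the $I_2$-solution, which changes nothing in the asymptotic bound; your explicit argument that feasibility forbids hybrid solutions is a welcome tightening of a step the paper leaves implicit.
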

\begin{proof}

Consider some $\beta$-competitive algorithm $A$ and the following order of
events: First add all items from $I_1$ and  $A$ will generate some approximate
solution $S_1$ for $I_1$. Now, add also all items from $I_2$ and note that nothing changes. Since by definition $\PROFIT(I_2)< \beta \PROFIT(I_1)$, the algorithm does not need to do anything and will ignore the new items. Remember also that adding any of the new items from $I_2$ would make the solution infeasible. 

We now proceed to remove and add again all items of $I_1 \backslash I'_1$ and repeat this $N$ times for some large $N\in \mathbb{N}$. By removing all these items the previous solution $S_1$ would consequently be reduced to a solution $S'_1 \subseteq I'_1$, and as $\PROFIT(I'_1)< \beta \PROFIT(I_2)$, our algorithm needs to change to a solution $S_2 \subseteq I_2$. When the items are added again, the algorithm also needs to switch from $S_2$ to a solution of $I_1$.

Let us now look at the necessary migration and the migration potential. The total migration potential we received is given through the arrival of all items and the repeated removal and re-adding of items and altogether we have migration potential of $\PROFIT(I_1) + \PROFIT(I_2) + N(\PROFIT(I_1) - \PROFIT(I'_1)) = \PROFIT(I_1) + \PROFIT(I_2) + 2Nc$. The necessary migration results from the repacking of the solutions of $I'_1$ and $I_2$. We have to note however, that the necessary migration for the solutions of $I'_1$ might be small or even $0$, when the approximate solution $S_1$ for $I_1$ does not use any items of $I'_1$. We know however that for $I_2$ we at least exchange a full approximate solution which yields a total necessary migration of at least $N\beta \PROFIT(I_2)$. For the total migration factor, we now have that:

\begin{align*}
    &\frac{N \beta \PROFIT(I_2)}{\PROFIT(I_1) + \PROFIT(I_2) + 2Nc } \ge \frac{\beta \PROFIT(I_2)}{2(c+1)}
  \end{align*}
when $N$ is chosen large enough.
\end{proof}

We can conclude that one very natural way of proving lower bounds merely requires two instances, call them $I_1,I_2$, with certain properties. For one the adversary needs to have the possibility of being able to switch between instances $I_1$ and $I_2$ with low migration potential. If now for these two instances an algorithm create respective solutions $S_1,S_2$ in a way that switching between $S_1$ and $S_2$ becomes necessary, when the adversary switches between $I_1,I_2$, and changing solutions requires high migration then the mentioned algorithm will inevitably have a high migration factor.

In the following, we want to show how much repacking is necessary when we want to achieve a robust PTAS or rather a $(1-\epsilon)$-competitive algorithm for some iconic choosing problems.
We will start with the \textsc{SubsetSum} problem and show that our achieved migration of $O(1/\epsilon)$ is indeed optimal by proving a matching lower bound.

\subsection{Lower bounds on the migration factor for \textsc{SubsetSum}}

This section examines lower bounds and inapproximability results in various
cases of the \textsc{SubsetSum} problem to allow a fine-grained view on the
hardness of this online problem.
To this end, we split the analysis into four parts, depending on the static
vs. the dynamic case and whether the instances are \emph{lax}.
Being lax means that the first instance can be non-empty, i.\,e. that we can present a
non-empty instance fully at $t=1$ without contributing any
migration potential.
The original setting where we start with an empty instance is
the \emph{strict} case. 

We will show the following lower bounds on the migration factor of an online algorithm for the \textsc{SubsetSum} problem with ratio $1-\epsilon$ with $\epsilon \in (0,\tfrac 12)$:
\begin{center}
\begin{tabular}{|l|c|c|}
\hline
\textsc{SubsetSum} & strict & lax \\
\hline
static & $\Omega(\log 1/\epsilon)$ & $\Omega(1/\epsilon)$ \\
dynamic & $\Omega(1/\epsilon)$ & $\Omega(1/\epsilon)$ \\
\hline
\end{tabular}
\end{center}
Notice that the \textsc{SubsetSum} problem is a special case of the vanilla \textsc{Knapsack} problem, \textsc{MultipleKnapsack} and \textsc{2DGeoKnapsack}. This implies that the lower bounds also hold for these problems. In the preceding sections, we have seen that these problems can be solved using a total migration factor in $O(1/\epsilon)$, such that these bounds are tight except for the strict static case.

\begin{theorem}[Strict static case]
Any online algorithm for the strict static case of \textsc{SubsetSum} with ratio $1-\epsilon$ needs at least a migration factor in $\Omega(\log 1/\epsilon)$.
\end{theorem}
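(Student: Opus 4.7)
The plan is to construct an adversarial strict-static instance of \textsc{SubsetSum} on which any $(1-\epsilon)$-competitive online algorithm must accumulate migration cost at a rate $\Omega(\log(1/\epsilon))$ times the rate at which new migration potential arrives. I would organise the arrivals into $k = \lceil \log_2(1/\epsilon) \rceil$ rounds, each of which releases a small ``anchor upgrade'' together with a little bit of filler. The item sizes will be chosen so that at the end of round~$i$ the optimal value climbs to $V_i = V_0 (1+\delta)^i$ with $\delta = \Theta(1/k)$, deliberately set just above $\epsilon$ so that the previous round's anchor no longer meets the $(1-\epsilon)$-threshold. Because consecutive anchors are mutually incompatible in capacity (each taking slightly more than $C/2$), every $(1-\epsilon)$-competitive algorithm is forced to discard the round-$(i-1)$ anchor and adopt the round-$i$ one, paying a migration cost of order $V_{i-1}$ at each round boundary.

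After fixing the capacity to $C = 1$ and the parameters $V_0 = \Theta(1)$ and $\delta = \Theta(1/k)$, I would verify the ``forced swap'' property: retaining the previous anchor would give profit at most $V_{i-1} = V_i / (1+\delta) < (1-\epsilon) V_i$, so the algorithm must evict it, and the only way to restore an $(1-\epsilon)$-competitive profit is to install the new anchor together with the round's fillers. Summing the eviction costs across all rounds gives a total migration cost of $\sum_{i=1}^{k} V_{i-1} = V_0 \cdot ((1+\delta)^k - 1)/\delta = \Theta(V_0 \log(1/\epsilon))$, while the total migration potential telescopes to $V_k = V_0 (1+\delta)^k = \Theta(V_0)$, provided the round-$i$ arrivals contribute only the incremental potential $V_i - V_{i-1}$. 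Dividing these two quantities at the time just after the last round yields the promised migration factor of $\Omega(\log(1/\epsilon))$.

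The main obstacle I expect to face is realising the required telescoping of potential inside the one-dimensional \textsc{SubsetSum} setting. Because profit equals size here, one cannot simply scale the ``value'' of an anchor without also increasing its size footprint, and hence its contribution to the migration potential. The construction must therefore choose anchor sizes that only barely exceed $C/2$, with each successive anchor marginally larger than its predecessor, so that the sum of all arriving sizes remains $\Theta(V_0)$ rather than growing with $k$. Simultaneously the fillers in each round must be large enough to let the current anchor plus fillers reach the target $V_i$, but small enough not to blow up the potential; verifying that these competing constraints can be satisfied simultaneously, and that no alternative algorithmic strategy (for instance, keeping several near-optimal anchors as a hedge or deferring swaps) sidesteps the forced eviction, is the delicate step on which I would spend most of the care.
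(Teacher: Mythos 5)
Your high-level plan is the right one (force $\Theta(\log(1/\epsilon))$ full swaps of a $\Theta(C)$-value solution while keeping the total arrived size at $O(C)$), and your arithmetic for the cost side is fine, but the construction you sketch cannot deliver the potential side, and you have in fact put your finger on exactly the point where it breaks. In \textsc{SubsetSum} the migration potential of an arriving item is its size, so $k$ pairwise-incompatible anchors, each of size strictly greater than $C/2$, contribute strictly more than $kC/2$ to the total potential no matter how ``marginally'' each exceeds its predecessor. There is no choice of anchor sizes for which ``the sum of all arriving sizes remains $\Theta(V_0)$'': with $V_0=\Theta(C)$ that sum is forced to be $\Omega(kC)$, the per-round potential injected by the new anchor is already of the same order as the per-round swap cost, and the resulting migration factor is $O(1)$, not $\Omega(\log(1/\epsilon))$. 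The ``telescoping to the increment $V_i-V_{i-1}$'' that your accounting needs simply is not available in this model.

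The paper escapes this by never re-introducing a large item: it sets $C=2^T$, starts with one item of size $2^{T-1}$, and at step $t>1$ adds a single item of size $3\cdot 2^{T-t}$ --- geometrically \emph{decreasing}, so the total arrived size is $O(2^T)$. The trick is that the unique optimum at time $t$ is $\{i_1,i_3,\dots\}$ for odd $t$ and $\{i_2,i_4,\dots\}$ for even $t$, two \emph{disjoint} families, each of value $2^T-2^{T-t}$; so a tiny arrival flips the entire $\Theta(2^T)$-value solution from one family to the other, costing $\ge \OPT(I_{t-1})$ per step. Your worry about hedging is also resolved there by a divisibility argument: at time $t$ every achievable value is a multiple of $2^{T-t}$, so the second-best value is exactly $\OPT(I_{t-1})$, which drops below $(1-\epsilon)\OPT(I_t)$ precisely for $t\le O(\log(1/\epsilon))$ --- this both forces the swap and explains why the bound stops at $\log(1/\epsilon)$ rounds. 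To repair your proof you would need to replace the single-anchor-per-round design with some analogue of these two interleaved families in which the forced flip is triggered by an item whose size shrinks geometrically with the round index; as written, the argument does not go through.
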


\begin{proof}
We construct an online instance of the strict static case of \textsc{SubsetSum} that generates a migration factor in $\Omega(\log 1 / \epsilon)$. To this end, let $T > 0$ be the number of time points in the instance and define the capacity $C := 2^T$. The first instance $I_1$ contains exactly one item $i_1$ of size $s_1 = 2^{T-1}$. For later time steps $t - 1 \to t$ with $t > 1$, we respectively add items $i_t$ of size $s_t = 3 \cdot 2^{T-t}$.

Consider the following sequence of \emph{alternating solutions}: Whenever $t$ is
odd, the solution consists of those items added in odd time steps, i.,e. we have
$S_t = \{ i_1,i_3,i_5,...,i_t \}$. When $t$ is even, the solution consists instead of those items added in even time steps, i.e. $S_t = \{ i_2,i_4,i_6,...,i_t \}$. They are visualized in the following diagram:
\begin{center}
$t=1$:
\begin{tikzpicture}
\draw (0,0) -- (6,0) -- (6,0.1);
\draw (0,0) -- (3,0) -- (3,0.1) -- (0,0.1) -- (0,0);
\end{tikzpicture}
\end{center}
\begin{center}
$t=2$:
\begin{tikzpicture}
\draw (0,0) -- (6,0) -- (6,0.1);
\draw (0,0) -- (4.5,0) -- (4.5,0.1) -- (0,0.1) -- (0,0);
\end{tikzpicture}
\end{center}
\begin{center}
$t=3$:
\begin{tikzpicture}
\draw (0,0) -- (6,0) -- (6,0.1);
\draw (0,0) -- (3,0) -- (3,0.1) -- (0,0.1) -- (0,0);
\draw (3,0) -- (5.25,0) -- (5.25,0.1) -- (3,0.1) -- (3,0);
\end{tikzpicture}
\end{center}
\begin{center}
$t=4$:
\begin{tikzpicture}
\draw (0,0) -- (6,0) -- (6,0.1);
\draw (0,0) -- (4.5,0) -- (4.5,0.1) -- (0,0.1) -- (0,0);
\draw (4.5,0) -- (5.625,0) -- (5.625,0.1) -- (4.5,0.1) -- (4.5,0);
\end{tikzpicture}
\end{center}
\begin{center}
$t=5$:
\begin{tikzpicture}
\draw (0,0) -- (6,0) -- (6,0.1);
\draw (0,0) -- (3,0) -- (3,0.1) -- (0,0.1) -- (0,0);
\draw (3,0) -- (5.25,0) -- (5.25,0.1) -- (3,0.1) -- (3,0);
\draw (5.25,0) -- (5.8125,0) -- (5.8125,0.1) -- (5.25,0.1) -- (5.25,0);
\end{tikzpicture}
\end{center}
and so on. In both cases, we see that
\begin{equation*}
\PROFIT(S_t) = 2^T - 2^{T-t}.
\end{equation*}
This is the largest number less than $C = 2^T$ that is divisible by $2^{T-t}$ and hence the best possible profit using items with sizes that are divisible by $2^{T-t}$. Note that the total capacity $2^T$ can never be reached, as all items except $i_1$ are divisible by three but $2^T$ is not. This implies that the sequence of alternating solutions described above is optimal and is in fact the only optimal sequence of solutions.

When examining approximate solutions, we observe that every algorithm with ratio $\alpha \in (0,1)$ is forced to choose this sequence of solutions as long as the inequality $\OPT(I_{t-1}) < \alpha \cdot \OPT(I_t)$ holds. Equivalently:
\begin{align*}
\OPT(I_{t-1}) < \alpha \cdot \OPT(I_t)
&\iff 2^T - 2^{T-t+1} < \alpha \cdot (2^T - 2^{T-t}) \\
&\iff 1 - 2^{-t+1} < \alpha - \alpha \cdot 2^{-t} \\
&\iff 2^{-t} (\alpha - 2) < \alpha - 1 \\
&\iff t < \log_2 \left( \frac{2-\alpha}{1-\alpha} \right).
\end{align*}
Hence, with $T = \lfloor \log_2((2-\alpha)/(1-\alpha)) - 1 \rfloor$, any algorithm with approximation ratio $\alpha$ will choose this solution. Its migration factor has the following lower bound:
\begin{align*}
\gamma
&\geq \frac{\sum_{t=1}^{T-1} \OPT(I_t)}{2^{T-1} + \sum_{t=2}^T 3 \cdot 2^{T-t}}
= \frac{\sum_{t=1}^{T-1} (2^T-2^{T-t})}{2^{T-1} + \sum_{t=2}^T 3 \cdot 2^{T-t}} \\
&= \frac{\sum_{t=1}^{T-1} (1-2^{-t})}{2^{-1} + 3 \cdot \sum_{t=2}^T 2^{-t}}
= \frac{T-1-\sum_{t=1}^{T-1} 2^{-t}}{2^{-1} + 3 \cdot \sum_{t=2}^T 2^{-t}} \\
&= \Omega(T)
= \Omega(\log \tfrac{2-\alpha}{1-\alpha}).
\end{align*}
In particular, with $\alpha = 1-\epsilon$, we get $\gamma = \Omega(\log 1/\epsilon)$.
\end{proof}

\begin{theorem}[Lax static case]
Any online algorithm for the lax static case of \textsc{SubsetSum} with ratio $1-\epsilon$ needs at least a migration factor in $\Omega(1/\epsilon)$.
\end{theorem}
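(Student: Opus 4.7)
My strategy is an adversarial construction that extends the alternating-solution approach used in the strict static proof and exploits the lax property to upgrade the bound from $\Omega(\log 1/\epsilon)$ to $\Omega(1/\epsilon)$. The key observation is that in the strict case the item sizes had to grow geometrically in order to keep the ratio $\OPT(I_{t-1})/\OPT(I_t)$ below $1-\epsilon$, which capped the number of forced switches at $\log(1/\epsilon)$. In the lax case, most of the relevant items can be preloaded into $I_1$ at zero migration potential, so a linear number of forced switches becomes possible.

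Concretely, I would fix a capacity $C$ and design $I_1$ to contain a pool of $T = \Theta(1/\epsilon)$ pairwise-disjoint candidate packings $A_1, A_2, \ldots, A_T$, each of total profit close to $C$, such that the knapsack cannot simultaneously hold any two of them even in part without losing $\Omega(1)$ profit. Then I would schedule $T$ arrivals, each an $\epsilon$-sized item chosen so that after the $t$-th arrival the only $(1-\epsilon)$-competitive solution is (a near version of) $A_t$, while after the $(t+1)$-st arrival it is $A_{t+1}$. This mirrors the alternating-optimum analysis of the strict static proof, except that the alternation is driven by small arrivals and a large preloaded pool in $I_1$ rather than by a single sequence of geometric items.

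The quantitative accounting is then direct: between consecutive time steps the algorithm must switch from $A_t$ to $A_{t+1}$, which share at most a small fraction of their profit, inducing migration $\Omega(1)$; meanwhile the migration potential contributed by the new arrival is only $\Theta(\epsilon)$. Summing over $T$ arrivals gives total migration $\Omega(T) = \Omega(1/\epsilon)$ against total migration potential $\Theta(T\epsilon) = \Theta(1)$, so the amortized migration factor is $\Omega(1/\epsilon)$, matching the upper bound of Theorem~\ref{ChoosingFW}.

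The main obstacle is ensuring that the forced alternation is genuine: one must rule out compromise solutions that remain $(1-\epsilon)$-competitive for several consecutive instances without substantial rearrangement. In the strict proof this came essentially for free from the geometric size scaling; in the lax static setting it must be engineered by a careful choice of sizes within the pool of alternatives in $I_1$, for instance arranging them in an arithmetic progression so that each arriving $\epsilon$-sized trigger pinpoints a unique $A_t$ via the capacity constraint. Verifying simultaneously that $\OPT(I_{t-1})<(1-\epsilon)\OPT(I_t)$ at every step and that every feasible $(1-\epsilon)$-approximate solution at time $t$ is essentially $A_t$ (i.e., shares all but $o(1)$ profit with $A_t$) is the technical heart of the argument.
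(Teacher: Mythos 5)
Your plan defers its own ``technical heart,'' and that is exactly where the difficulty sits. The step you leave open --- showing that after the $t$-th arrival \emph{every} $(1-\epsilon)$-approximate solution essentially coincides with $A_t$, for $\Theta(1/\epsilon)$ successive values of $t$ --- is not a routine verification for \textsc{SubsetSum}: profit equals size, so any subset whose total size is close to $C$ is near-optimal, and ruling out cross-pool compromise solutions over $\Theta(1/\epsilon)$ rounds would require a delicate number-theoretic construction that you have not supplied and that may not exist. There is also a structural reason the multi-phase design cannot help: to force a switch at step $t$ you need $\OPT(I_{t-1}) < (1-\epsilon)\OPT(I_t)$, and since $\OPT(I_t) \le \OPT(I_{t-1}) + s_t$ for the arriving size $s_t$, every trigger item must have size greater than $\epsilon\,\OPT(I_t)$. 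Each phase therefore contributes migration at most $O(C)$ against potential $\Omega(\epsilon C)$, so amortizing over $T$ phases yields exactly the same $\Omega(1/\epsilon)$ ratio as a single phase; the alternating pool buys nothing while concentrating all the unresolved risk.

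The paper exploits laxness differently and far more cheaply: the point is not to force \emph{many} switches but to make \emph{one} switch have a vanishingly small trigger. With $C = \lfloor 1/\epsilon - 1\rfloor$, preload $I_1 = \{i,j\}$ with sizes $C-2$ and $C-1$ at zero migration potential; since $(C-2)/(C-1) < (C-1)/C < 1-\epsilon$, the algorithm must take $\{j\}$. A single arrival $k$ of size $2$ then makes $\{i,k\}$ of profit $C$ the only solution meeting ratio $1-\epsilon$, forcing migration $2C-3$ against potential $2$, which is $\Omega(1/\epsilon)$ outright. If you collapse your construction to one phase with a two-item preloaded ``pool,'' it becomes this argument; as written, the proposal has a genuine gap at the uniqueness step and should be reworked along these lines.
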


\begin{proof}
Let the capacity $C$ be given by $\lfloor 1/\epsilon - 1 \rfloor$. Then
\begin{equation*}
\frac{C-1}{C} < 1-\epsilon.
\end{equation*}
Because we are in the lax case, we are free to present a full instance at $t=1$ without contributing migration potential. To exploit this, we let the first instance consist of an item $i$ of size $C-2$ and an item $j$ of size $C-1$. Because
\begin{equation*}
\frac{C-2}{C-1} < \frac{C-1}{C} < 1-\epsilon,
\end{equation*}
any algorithm with ratio $1-\epsilon$ will choose the solution consisting of $j$ only. For the next instance, let us insert an item $k$ of size $2$. After the insertion, the optimal solution will be $\{ i,k \}$ with profit $C$. The algorithm will follow this sequence of solutions, since
\begin{equation*}
\frac{C-1}{C} < 1-\epsilon.
\end{equation*}
This generates the migration factor
\begin{equation*}
\gamma = \frac{C-2 + C-1}{2} = \Omega(1/\epsilon).
\end{equation*}
\end{proof}

We now move on to the strict dynamic case. In the strict case, we are not allowed to present a full instance at $t=1$ without contributing migration potential. Instead, we have to start with an empty instance and add one item at a time. This destroys the argument from the lax static case. However, we can still salvage the basic parts to construct an instance for the strict dynamic case that generates a migration factor in $\Omega(1/\epsilon)$:

\begin{theorem}[Strict dynamic case]
Any online algorithm for the strict dynamic case of \textsc{SubsetSum} with ratio $1-\epsilon$ needs at least a migration factor in $\Omega(1/\epsilon)$.
\end{theorem}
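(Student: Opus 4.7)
The plan is to adapt the alternating-solutions trick from the lax static case to the strict dynamic setting. Since the strict rule forbids presenting a full instance at $t=1$ without paying migration potential, I will first assemble the forcing instance one item at a time, and then exploit the dynamic model's ability to remove items in order to force the algorithm to toggle repeatedly between two near-optimal, essentially disjoint solutions. Each toggle will consume only $O(1)$ migration potential while costing $\Omega(1/\epsilon)$ in migration, so amortizing over many toggles yields the desired lower bound.

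Concretely, I would choose the capacity $C = \lfloor 1/\epsilon\rfloor - 1$ so that $C = \Theta(1/\epsilon)$ and both $(C-1)/C$ and $(C-2)/(C-1)$ lie strictly below $1-\epsilon$. The instance uses three items: $i$ of size $C-2$, $j$ of size $C-1$, and $k$ of size $2$. The feasible subsets with profit at least $(1-\epsilon)\cdot C$ are exactly $\{j\}$ (when $k$ is absent) and $\{i,k\}$ (when $k$ is present); these two solutions are disjoint, so any switch between them costs essentially the full profit of whichever is removed.

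In the warm-up phase, starting from the empty instance, I would insert $i$, then $j$, then $k$ in three consecutive time steps. A direct case check shows that once $j$ has arrived but before $k$ does, the only $(1-\epsilon)$-approximate solution is $\{j\}$, since $\{i\}$ has ratio $(C-2)/(C-1) < 1-\epsilon$; once $k$ arrives, the only $(1-\epsilon)$-approximate solution is $\{i,k\}$, since $(C-1)/C<1-\epsilon$. Thus the algorithm is forced to pass through $\{j\}$ and then end the warm-up in state $\{i,k\}$; the whole warm-up contributes only $\Theta(1/\epsilon)$ to both the accumulated migration potential and the migration cost. The alternation phase then repeats the following cycle $N$ times: remove $k$, then re-insert $k$. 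When $k$ departs it leaves the solution for free and the algorithm inherits $\{i\}$, which has ratio $(C-2)/(C-1)<1-\epsilon$, so the algorithm must switch to $\{j\}$, incurring $\phi(\{i\}\to\{j\}) = (C-2)+(C-1) = 2C-3$. When $k$ returns, its initial placement is free, but $(C-1)/C<1-\epsilon$ forces a switch from $\{j\}$ back to $\{i,k\}$, paying another $2C-3$. Each full cycle adds $4$ units of migration potential but at least $2(2C-3)$ units of migration cost, so the amortized migration factor is bounded below by
\[
\frac{2N(2C-3)}{\Theta(1/\epsilon)+4N} \xrightarrow[N\to\infty]{} \frac{2C-3}{2} = \Omega(1/\epsilon).
\]

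The main obstacle is the careful bookkeeping around the paper's convention that the initial placement of a newly arrived item and the automatic departure of a removed item are free: one has to verify that the symmetric-difference definition of $\phi$, restricted to items present before and after the transition, really does charge $2C-3$ per swap rather than something smaller. Once this is pinned down, the forcing arguments (``the algorithm cannot remain at $\{i\}$ after a removal, nor at $\{j\}$ after an insertion'') follow immediately from the inequalities on $C$, and the amortized calculation above is routine.
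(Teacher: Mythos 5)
Your proposal is correct and follows essentially the same construction as the paper's proof: the same capacity $C=\Theta(1/\epsilon)$, the same three items of sizes $C-2$, $C-1$, and $2$, and the same amortization over $N$ insert/remove cycles of the size-$2$ item. The only difference is cosmetic bookkeeping (you charge the symmetric difference $2C-3$ per swap, the paper charges the profit of the newly adopted solution), and both yield the bound $\Omega(1/\epsilon)$.
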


\begin{proof}
Let the capacity $C$ be given by $\lfloor 1/\epsilon - 1 \rfloor$. Then, again,
\begin{equation*}
\frac{C-1}{C} < 1-\epsilon.
\end{equation*}
Add an item $i$ of size $C-2$. Clearly, the optimal solution contains exactly this item. Now, we add an item $j$ of size $C-1$. Because
\begin{equation*}
\frac{C-2}{C-1} < \frac{C-1}{C} < 1-\epsilon,
\end{equation*}
any algorithm with ratio $1-\epsilon$ will choose $\{ j \}$ as the solution. Let us now repeatedly insert and remove an item $k$ of size $2$. After each insertion, the optimal solution will be $\{ i,k \}$ with profit $C$. After each deletion, it will be $\{ j \}$ with profit $C-1$. The algorithm will follow this sequence of solutions, since
\begin{equation*}
\frac{C-1}{C} < 1-\epsilon.
\end{equation*}
Hence, after $N$ repetitions, we moved a load of $NC + N(C-1)$ while only a volume of $C-2 + C-1 + 4N = 2C-3+4N$ was inserted or removed. This generates the migration factor
\begin{equation*}
\gamma = \frac{NC + N(C-1)}{2C-3+4N}.
\end{equation*}
For $N \to \infty$, this gives $\gamma \geq (C-1)/4 = \Omega(1/\epsilon)$.
\end{proof}

\begin{corollary}
The same theorem holds for the lax dynamic case of \textsc{SubsetSum}.
\end{corollary}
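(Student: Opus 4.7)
The plan is a one-line reduction from the preceding strict dynamic theorem. The lax dynamic setting subsumes the strict dynamic setting in the sense that the adversary in the lax case is \emph{permitted}, but not required, to present a non-empty initial instance at $t=1$ without migration potential. If the adversary instead chooses an empty initial instance, the resulting sequence is simultaneously a valid strict dynamic instance and a valid lax dynamic instance. Consequently, any algorithm that claims competitive ratio $1-\epsilon$ in the lax dynamic case must in particular handle every strict dynamic instance with the same guarantee, and therefore inherits every strict dynamic lower bound.

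I would therefore simply reuse the hard instance from the proof of the strict dynamic theorem verbatim: set $C := \lfloor 1/\epsilon - 1 \rfloor$, start with an empty knapsack, insert the item $i$ of size $C-2$, then the item $j$ of size $C-1$, and afterwards repeatedly insert and remove an item $k$ of size $2$. The entire analysis carries over without change. Every time step contributes its item's size as migration potential in both settings (an empty initial instance contributes nothing either way), the inequality $(C-1)/C < 1-\epsilon$ still forces any $(1-\epsilon)$-competitive algorithm to alternate between the solutions $\{i,k\}$ and $\{j\}$, and the same accounting yields a migration factor of at least $(C-1)/4 = \Omega(1/\epsilon)$ as the number of insert/remove repetitions grows.

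The only thing worth verifying is the direction of the reduction, namely that ``lax'' relaxes the adversary's obligations rather than the algorithm's. The paper's definition is unambiguous: laxness permits the adversary to present a non-empty first instance free of migration potential, which is an additional capability and never a restriction. Hence there is essentially no obstacle; the corollary is immediate, and I would expect the written argument to consist of little more than the observation above plus a pointer back to the strict dynamic construction and its analysis.
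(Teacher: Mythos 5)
Your reduction is correct and is exactly the (unwritten) argument the paper intends: the corollary is stated without proof immediately after the strict dynamic theorem, precisely because every strict dynamic instance is a valid lax dynamic instance (the adversary simply declines to use the free non-empty initial instance), so the $\Omega(1/\epsilon)$ construction and its analysis transfer verbatim. You correctly verify the direction of the subsumption, which is the only point at which one could go wrong.
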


% \begin{proof}
% This follows easily from the fact that the strict dynamic case is a special case of the lax dynamic case.
% \end{proof}

\subsection{Inapproximability of \textsc{Knapsack} with weight migration}

In the preceding sections, we have assumed that migration is measured by total profit and the migration potential is similarly given by the profit of new items. In this section, we give inapproximability results for the case in which migration is measured by total \emph{weight} and the migration potential is given by the \emph{weight} of new items. Intuitively, we exploit that the \textsc{Knapsack} problem is only concerned with maximizing profit, which is a priori uncorrelated with the weight of a solution. It is thus possible to trick an algorithm into migrating a lot of weight for an item with large profit but small weight, i.e. small migration potential.

\begin{theorem}[Lax static case]\label{knapsack_static_case_by_weight}
There cannot exist an online algorithm with bounded migration factor for the lax static case of \textsc{Knapsack} with approximation ratio $\alpha > 1/2$ when migration is measured by weight.
\end{theorem}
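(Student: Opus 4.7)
The plan is to exploit the decoupling between profit (what the approximation ratio cares about) and weight (what now governs migration), by building a two-step lax instance in which the algorithm is forced to evict a heavy, very-low-profit item to make room for a newly arriving light, very-high-profit item, thereby paying weight migration $\Theta(C)$ for $C$ chosen arbitrarily large while receiving only weight migration potential $1$.

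Concretely, I fix any prospective bound $\gamma$ on the migration factor and pick an integer $C > \gamma$, which will play the role of both the capacity and the weight of the first item. Under the lax convention, at $t=1$ I present a single item $a$ with $w_a = C$ and $p_a = 1$. Since $\OPT(I_1) = 1$ and any $\alpha$-approximate solution with $\alpha > 1/2 > 0$ must have integer profit at least $\alpha$, hence at least $1$, the algorithm is forced to pick $S_1 = \{a\}$, and no migration potential has been accumulated so far. At $t=2$ I add an item $b$ with $w_b = 1$ and integer profit $p_b = P$ chosen so that $\alpha P > 1$ (any $P > 1/\alpha$ works). Because $w_a + w_b = C+1 > C$, the two items are mutually exclusive, so $\OPT(I_2) = P$, and any $\alpha$-approximate $S_2$ must satisfy $\PROFIT(S_2) \ge \alpha P > 1$, which rules out $\emptyset$ and $\{a\}$; hence $S_2 = \{b\}$.

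The transition from $\{a\}$ to $\{b\}$ requires removing $a$ at a weight cost of $C$, while placing $b$ into the solution is free as the initial assignment of a newly arrived item. So the weight migration incurred up to $t=2$ is at least $C$, whereas the migration potential accumulated through $t=2$ is only $w_b = 1$, since the lax convention assigns no potential to the initial instance. The migration factor is therefore at least $C > \gamma$, contradicting any fixed bound and proving the claim.

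The construction itself is elementary; the only point that needs care is to invoke the convention that the initial assignment of a newly arriving item carries no migration cost, so that the entire weight of $a$ is paid on its eviction and cannot be amortized away against the insertion of $b$. Modulo this, the $\alpha > 1/2$ hypothesis is used only through the weaker consequence $\alpha > 0$ (to force $S_1 = \{a\}$), together with the choice of $P$ large enough to force the switch at $t=2$.
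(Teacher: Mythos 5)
Your proposal is correct and is essentially the paper's own argument: both present a single heavy (weight $C$), profit-$1$ item in the lax initial instance, then insert a weight-$1$ item of profit large enough that any $\alpha$-approximation with $\alpha>1/2$ must evict the heavy item, incurring weight migration $C$ against potential $1$. The only cosmetic difference is that the paper fixes the new item's profit to $2$ (so that $\alpha>1/2$ directly forces the switch) rather than a general $P>1/\alpha$.
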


\begin{proof}
We construct an online instance of the lax static case that is unsolvable under the hypothesis of a bounded migration factor.

Suppose that there is an online algorithm $A_\text{on}$ that solves \textsc{Knapsack} with an approximation ratio of $\alpha > 1/2$ and assume that the migration factor of $A_\text{on}$ is bounded by some constant $B$. Let $C > B$ be the capacity of the knapsack. The instance $I_1$ at $t=1$ contains the item $i$ with weight $C$ and profit $1$. Our algorithm $A_\text{on}$ will choose the solution $S_1 = \{ i \}$. The instance $I_2$ at $t=2$ is constructed by adding the item $j$ to $I_1$ with weight $1$ and profit $2$. Due to $\alpha > 1/2$, this forces the algorithm to choose the solution $S_2 = \{ j \}$ and the migration factor
\begin{equation*}
\gamma = C > B,
\end{equation*}
is generated. This is in contradiction with the assumption that $B$ be an upper bound on the migration factor.
\end{proof}

\begin{corollary}
The same theorem holds for the following generalizations of the lax static case of \textsc{Knapsack}:
\begin{itemize}
\item The lax dynamic case of \textsc{Knapsack}.
\item The lax static and dynamic cases of \textsc{MultipleKnapsack}.
\item The lax static and dynamic cases of \textsc{2DGeoKnapsack}.
\end{itemize}
\end{corollary}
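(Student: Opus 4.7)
The plan is to observe that each of the four problems listed contains the lax static case of \textsc{Knapsack} as a special case, and then reuse the adversarial instance built in the proof of Theorem~\ref{knapsack_static_case_by_weight} essentially verbatim. Since that proof only uses two time steps and never removes an item, and since the contradiction it derives depends solely on the existence of the two items $i$ (weight $C$, profit $1$) and $j$ (weight $1$, profit $2$) forcing a swap under approximation ratio $\alpha>1/2$, the same reasoning will go through in any setting that can faithfully realize that two-step instance.

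First I would handle the lax dynamic case of \textsc{Knapsack}. Every lax static instance is also a lax dynamic instance, interpreted as a sequence in which no item ever departs. Thus the instance from the proof of Theorem~\ref{knapsack_static_case_by_weight} is a valid input for the lax dynamic case and forces exactly the same migration factor $\gamma=C>B$, contradicting any purported bound $B$. Next I would treat \textsc{MultipleKnapsack}. Any instance of \textsc{Knapsack} is the $m=1$ case of \textsc{MultipleKnapsack}, so the same two-item instance (with the single knapsack having capacity $C$) generates the identical contradiction; the static version inherits the bound directly, and the dynamic version inherits it via the observation of the previous step.

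For \textsc{2DGeoKnapsack} I would embed the one-dimensional \textsc{Knapsack} instance as a geometric one by fixing the knapsack dimensions to $W\times H=C\times 1$ and realizing each item of weight $w$ as an axis-aligned rectangle of width $w$ and height $1$ (profits unchanged). Under this embedding, the feasibility of a packing collapses to the usual weight constraint $\sum w_i\le C$, because all rectangles have height $H$ and must tile side-by-side; the migration potential given by the total weight of arriving rectangles coincides with the migration potential in the underlying \textsc{Knapsack} instance. Hence the same two-step adversary (item $i$ of width $C$ and profit $1$, then item $j$ of width $1$ and profit $2$) forces any algorithm with $\alpha>1/2$ to swap a rectangle of width $C$ out for one of width $1$, yielding $\gamma=C>B$. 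Both the lax static and lax dynamic versions inherit the contradiction.

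The only part that requires any care is the geometric embedding, since one must check that the reduction preserves both feasibility and the weight-based migration measure. Once the reduction is set up with $H=1$ and item height $1$, everything is immediate, so I do not expect any genuine obstacle; the corollary is really just a packaging of Theorem~\ref{knapsack_static_case_by_weight} across the four listed problem variants.
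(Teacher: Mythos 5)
Your proposal is correct and matches the paper's (implicit) reasoning: the corollary is justified exactly by the observation that each listed variant contains the lax static case of \textsc{Knapsack} as a special case (no departures, $m=1$, or the height-$1$ geometric embedding), so the adversarial two-item instance of Theorem~\ref{knapsack_static_case_by_weight} transfers verbatim. The paper offers no further argument, and your extra care about the $C\times 1$ embedding preserving feasibility and the weight-based migration measure is the only point that needed checking.
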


Under the hypothesis that the first instance be empty, the construction in the proof of the previous theorem fails. However, we can reuse the basic ideas for a counterexample in the strict dynamic case and obtain:

\begin{theorem}[Strict dynamic case]
There cannot exist an online algorithm with bounded migration factor for the strict dynamic case of \textsc{Knapsack} with approximation ratio $\alpha > 1/2$ when migration is measured by weight.
\end{theorem}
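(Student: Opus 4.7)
The plan is to extend the construction of Theorem~\ref{knapsack_static_case_by_weight} from the lax static case to the strict dynamic setting. The obstacle compared to the lax case is that we can no longer pre-load two items at $t=1$ for free: in the strict case every arrival contributes migration potential equal to the weight of the arriving item, so the single big-weight/small-profit item of the earlier proof immediately gifts the algorithm enough potential to pay for one swap. The dynamic ingredient lets us bypass this: by repeatedly inserting and removing a small-weight but high-profit item, we can force the algorithm to swap the heavy item in and out of its solution arbitrarily often while contributing only a tiny amount of new potential per cycle.

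Assume toward contradiction that some algorithm achieves ratio $\alpha > 1/2$ with migration factor bounded by a constant $B$. Choose the knapsack capacity $C > B$. Starting from the empty instance, first insert item $i$ of weight $C$ and profit $1$; the algorithm must take $S_1 = \{i\}$. Then insert item $j$ of weight $1$ and profit $2$; since $\alpha > 1/2$, any $\alpha$-approximate solution must be exactly $\{j\}$, so the algorithm drops $i$, paying weight migration $C$ against the $C+1$ of potential accumulated so far.

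Now repeat the following two-step cycle $N$ times. First, remove $j$ from the instance (potential $1$); the only remaining item is $i$, so the algorithm must re-include it, incurring weight migration $C$. Second, reinsert $j$ (potential $1$); since $\alpha > 1/2$, the algorithm must switch back to $\{j\}$, again paying weight migration $C$ to drop $i$. Over $N$ cycles the cumulative potential grows by $2N$ while the cumulative weight migration grows by $2NC$. Including the setup phase, the overall migration factor is at least $(2N+1)C/(C+1+2N)$, which tends to $C$ as $N\to\infty$ and therefore eventually exceeds $B$, contradicting the assumed bound.

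The one bookkeeping subtlety worth pinning down is whether placing a just-arrived copy of $j$ into the solution counts as migration or as a cost-free initial assignment. This does not affect the conclusion: the dominant $C$-weight cost in each cycle is the movement of $i$ in or out of the solution, and $i$ is never a newly arrived item after $t=1$, so it is unambiguously charged under either convention. Consequently the asymptotic ratio $C$ is stable, and the contradiction with the bound $B$ goes through regardless of how the initial-assignment convention is set.
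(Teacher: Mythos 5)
Your proposal is correct and follows essentially the same route as the paper: reuse the two items $i$ (weight $C>B$, profit $1$) and $j$ (weight $1$, profit $2$) from the lax static construction, build the instance up from empty, and then cycle $j$ in and out $N$ times so that the algorithm must repeatedly migrate the weight-$C$ item while only $O(N)$ potential accrues, driving the migration factor toward $C>B$. Your accounting is slightly more explicit than the paper's (which states the ratio as $NC/(C+N)$), but the construction and the contradiction are identical.
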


\begin{proof}
Let $i$ and $j$ be the same items as in the proof of theorem \ref{knapsack_static_case_by_weight}.
We construct an online instance as follows: The first instance is empty as assumed. The second instance $I_2$ is given by $\{ i \}$. In the following time steps, we successively add and remove $j$ to and from the instance for a total of $N$ times. The algorithm has to follow the sequence of optimal solutions by the argument in the proof of theorem \ref{knapsack_static_case_by_weight}. This generates the migration factor
\begin{equation*}
\gamma = \frac{NC}{C+N},
\end{equation*}
which is unbounded.
\end{proof}

%\todo{Ich glaube, hier muss man die Fälle anpassen.}
\begin{corollary}
The same theorem holds for the following generalizations of the strict dynamic case of \textsc{Knapsack}:
\begin{itemize}
\item The lax dynamic case of \textsc{Knapsack}.
\item The lax and strict dynamic cases of \textsc{MultipleKnapsack}.
\item The lax and strict dynamic cases of \textsc{2DGeoKnapsack}.
\end{itemize}
\end{corollary}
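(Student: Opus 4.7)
The plan is to establish each generalization by exhibiting an embedding of the strict dynamic Knapsack counterexample from the preceding theorem into the corresponding problem, so that the same two items $i$ of weight $C$ and profit $1$ and $j$ of weight $1$ and profit $2$, together with the repeated insertion/removal of $j$, force the algorithm through the same sequence of optimal solutions and generate the identical unbounded weight migration factor $\gamma = NC/(C+N)$.

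First I would handle the lax dynamic case of Knapsack trivially: the strict dynamic case is a special case of the lax one (every strict instance is also a lax instance, realised by leaving the initial non-empty instance empty), so the very same construction applies verbatim. Next, for both the strict and lax dynamic cases of MultipleKnapsack, I would recall from the preliminaries that Knapsack is the special case $m=1$. Given any purported $\alpha$-approximation algorithm with bounded weight migration for MultipleKnapsack, I would simply instantiate the counterexample instance with a single knapsack of capacity $C$; the produced solutions coincide with Knapsack solutions, the approximation ratio $\alpha > 1/2$ is inherited, and the contradiction carries over unchanged.

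For the strict and lax dynamic cases of 2DGeoKnapsack, I would embed the Knapsack counterexample into a rectangular knapsack of dimensions $W \times H$ by choosing $W := C$, $H := 1$, and turning every one-dimensional item of weight $w$ and profit $p$ into an axis-aligned rectangle of width $w$, height $1$, and profit $p$. Since every rectangle already occupies the full height of the container, any feasible non-overlapping packing is essentially a one-dimensional packing of the widths inside $[0,W]$, which is feasible iff the total width is at most $C$. Hence feasible solutions, profits, weights, and the notion of weight migration coincide exactly with those of the underlying Knapsack instance, so the algorithm is again forced through the same alternating sequence of solutions, and the same unbounded migration factor appears.

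The only delicate point, and the step I expect to require the most care, is the precise convention for \textbf{weight} in the 2DGeoKnapsack setting, since a rectangle has two natural size measures (area versus e.g.\ longer side). The construction above is robust under either natural convention: with $H = 1$ the area equals the width, and more generally a rescaling by $H$ affects both the migration performed by the algorithm and the migration potential supplied by new items by the same factor, so the ratio defining the migration factor is preserved and unboundedness is retained. With this observation in place, the four corollary items follow immediately from the preceding theorem without further work.
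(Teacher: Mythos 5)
Your proposal is correct and matches the paper's (implicit) intent: the paper leaves this corollary unproven, relying on the standard observation — stated explicitly earlier for the \textsc{SubsetSum} lower bounds — that \textsc{Knapsack} is a special case of each listed problem, so the counterexample instance transfers verbatim. Your explicit $C \times 1$ embedding for \textsc{2DGeoKnapsack} and the remark that rescaling the height affects migration cost and migration potential by the same factor are exactly the details needed to make that reduction rigorous.
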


\subsection{Lower bounds for maximum independet set}

We now take a look at \textsc{MaximumIndependentSet} again. If we consider \textsc{MaximumIndependentSet} on arbitrary graphs we can actually prove the same bound for necessary migration, when aiming for a robust PTAS. Since we can choose any selection of edges among nodes in the graph we can simply emulate the same instances that we constructed for the \textsc{SubsetSum} Problem.

\begin{theorem}
  There is an instance of the online
  \textsc{MaximumIndependentSet} problem such that the migration needed for a solution with
  value $(1-\epsilon)\OPT$ is $\Omega(1/\epsilon)$. 
\end{theorem}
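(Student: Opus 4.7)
The strategy, as hinted in the paper, is to translate the strict dynamic \textsc{SubsetSum} lower bound construction into a weighted \textsc{MaximumIndependentSet} instance. The key observation is that we can freely place edges between vertices to model incompatibility constraints, so the capacity constraint of a knapsack can be simulated by explicit edges connecting those vertices whose collective ``sizes'' exceed capacity. Since the choosing-problem migration potential of an object equals its profit in both problems, the migration ratio we derive from \textsc{SubsetSum} carries over unchanged.

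\textbf{Construction.} Set $C := \lfloor 1/\epsilon - 1 \rfloor$ so that $(C-1)/C < 1-\epsilon$. I would build a weighted graph with three vertices $v_i, v_j, v_k$ of weights $C-2$, $C-1$, and $2$ respectively, and edges $\{v_i,v_j\}$ and $\{v_j,v_k\}$ (but no edge between $v_i$ and $v_k$). The two maximal independent sets are then $S_j := \{v_j\}$ with weight $C-1$ and $S_{ik} := \{v_i, v_k\}$ with weight $C$, mirroring exactly the two candidate solutions $\{j\}$ and $\{i,k\}$ in the \textsc{SubsetSum} construction. To remain in the strict online setting, I first introduce $v_i$, then $v_j$ together with the edge $\{v_i,v_j\}$, and then repeatedly add and remove the vertex $v_k$ together with its incident edge $\{v_j, v_k\}$ for $N$ rounds.

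\textbf{Deriving the bound.} After $v_j$ arrives, the unique $(1-\epsilon)$-approximate solution is $\{v_j\}$ since $(C-2)/(C-1) < (C-1)/C < 1-\epsilon$. Each subsequent insertion of $v_k$ raises the optimum to $C$, forcing any $(1-\epsilon)$-competitive algorithm to switch to $\{v_i, v_k\}$; each removal forces it back to $\{v_j\}$. The total weight migrated over $N$ switches is at least $N(C + (C-1))$, while the total migration potential received is $(C-2) + (C-1) + 4N$ (the initial arrivals plus weight $2$ for each of $2N$ arrivals/departures of $v_k$). Letting $N \to \infty$ yields an amortized migration factor of at least $(C-1)/4 = \Omega(1/\epsilon)$, which is exactly the bound claimed.

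\textbf{Main obstacle.} The proof is essentially an emulation argument, so the only thing that needs care is verifying that the migration-potential and profit accounting for choosing problems (Definition~\ref{Def:ChoosingP}), where both correspond to vertex weight for MIS, lines up with the sizes used in the \textsc{SubsetSum} lower bound, where the single number $s_\bullet$ plays both roles. Because both problems are choosing problems and in both cases the migration potential of an arriving/departing object equals its profit (resp.\ size), this translation is faithful, and no additional combinatorial work is required beyond the straightforward check that the three vertices with the stated edges behave as claimed.
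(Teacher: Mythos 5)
Your construction is correct as a lower bound, but it proves a different statement from the one at hand: it is, almost verbatim, the paper's proof of the \emph{next} theorem, which concerns the \emph{weighted} \textsc{MaximumIndependentSet} problem and explicitly advertises an instance of ``three nodes on one path'' with weights $C-2$, $C-1$, $2$. The theorem you were asked to prove is stated for the plain \textsc{MaximumIndependentSet} problem on arbitrary graphs; in the paper's conventions this is the unweighted problem (every node has profit $1$), and the surrounding discussion deliberately separates ``the unweighted case'' from ``the world of weighted \textsc{MaximumIndependentSet}.'' A hard instance for the weighted generalization does not yield a hard instance for the unweighted special case, so hanging profits $C-2$, $C-1$ and $2$ on single vertices leaves a genuine gap for the statement as written.

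The repair is the one the paper actually uses: emulate each \textsc{SubsetSum} item of size $s$ by a cluster of $s$ unit-profit vertices, and connect mutually exclusive clusters by complete bipartite edge sets. Concretely, take $V_1$ with $|V_1|=C$ and $V_2$ with $|V_2|=C-1$, join every vertex of $V_1$ to every vertex of $V_2$, set $V_1' := V_1\setminus\{v,w\}$ for two vertices $v,w\in V_1$, and repeatedly remove and re-add $v$ and $w$. The accounting is then the same as in your proposal, except that each round contributes migration potential $4$ (one unit per removed or re-added unit-profit vertex) while each forced switch between (large subsets of) $V_1$ and $V_2$ migrates profit $\Omega(C)$, giving $\Omega(1/\epsilon)$. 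Note that this blow-up is precisely why the resulting graph contains $K_{3,3}$ and the argument is confined to arbitrary graphs; your three-vertex weighted path is the paper's device for recovering the bound on restricted graph classes such as planar or unit-disk graphs, but only for the weighted variant of the problem.
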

\begin{proof}
Set $C := 2\lceil 1/(3\epsilon) \rceil$ and consider two sets of nodes $V_1,V_2$ with edges $E :=\{{v_1,v_2}| v_1\in V_1,v_2\in V_2\}$, then both $V_1$ and $V_2$ are feasible solutions while mixing them would destroy independence. Set $V'_1 := V_1 \backslash \{v,w\}$ for two nodes $v,w\in V_1$ and by adding all nodes and then repeatedly removing $v$ and $w$ and re-adding them we created the same situation as for \textsc{SubsetSum} leading to a necessary migration factor of $\Omega(1/\epsilon)$.  
\end{proof}

While this construction works on arbitrary graphs the same construction can be difficult or rather impossible if we limit ourselves to certain graph classes. If for example we consider \textsc{MaximumIndependentSet} on planar graphs the same instance cannot be built. Since we have more than three nodes in each of the two sets $V_1,V_2$ the graph would have to contain a $K_{3,3}$ as a subgraph and would therefore not be a planar graph. A similar argument is true for unit-disk graphs. When given one node $v$ which we can consider without loss of generality to be in set $V_1$, then we can see that the number of nodes we can add connected to $v$ but independent to each other is bound by the number of unit-disks we can place non intersecting such that all their center points lie on a disk with radius $2$ around the center of $v$. This number is bound and since we can choose $\epsilon$ small enough such that $|V_2|$ exceeds that bound, constructing an instance like above is not possible. This leaves an open problem whether the migration that our algorithm achieved for the cases of \textsc{MaximumIndependentSet} on planar or unit disk graphs can be improved or whether there exists instances that enforce a migration of $O(1/\epsilon)$. 

In the world of weighted \textsc{MaximumIndependentSet} however it is again possible to prove the lower bound even for graph classes such as unit-disks or planar graphs. We can in this setting once again emulate the same instance that we had for the unweighed case, but can drastically reduce the number of necessary nodes by introducing nodes with high weight, leading to the following result.

\begin{theorem}
  There is an instance of the online weighted
  \textsc{MaximumIndependentSet} problem, containing only three nodes on one path, such that the migration needed for a solution with
  value $(1-\epsilon)\OPT$ is $\Omega(1/\epsilon)$. 
\end{theorem}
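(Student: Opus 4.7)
The plan is to transplant the strict-dynamic \textsc{SubsetSum} lower bound construction onto the three-vertex path $P_3$ with vertices $v_1 - v_2 - v_3$. I would assign weights $w_1 = C-2$, $w_2 = C-1$, $w_3 = 2$ where $C := \lfloor 1/\epsilon - 1 \rfloor$. The only interesting maximal independent sets are $\{v_2\}$ with weight $C-1$ and $\{v_1,v_3\}$ with weight $C$, so the situation is exactly analogous to the \textsc{SubsetSum} instance with items of sizes $C-2$, $C-1$ and $2$. By the choice of $C$, both of the inequalities $(C-1)/C < 1-\epsilon$ and $(C-2)/(C-1) < 1-\epsilon$ hold, so no $(1-\epsilon)$-competitive algorithm has any slack between the two candidate solutions.

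The adversary first adds $v_1$, $v_2$, $v_3$ together with the two edges of the path, and then repeatedly deletes and re-inserts the low-weight leaf $v_3$. I claim that any $(1-\epsilon)$-competitive algorithm must alternate between maintaining $\{v_1, v_3\}$ (when $v_3$ is present) and $\{v_2\}$ (when $v_3$ is absent). Indeed, when $v_3$ disappears, the algorithm's stored set automatically reduces to a subset of $\{v_1\}$, whose weight $C-2$ is below the $(1-\epsilon)$-threshold against the new optimum $\{v_2\}$; when $v_3$ reappears, keeping $\{v_2\}$ fails the $(1-\epsilon)$-threshold against $\{v_1,v_3\}$. The verification uses exactly the chain of inequalities from the strict-dynamic \textsc{SubsetSum} proof, with node weights in place of item sizes.

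For the migration accounting, each deletion and each re-insertion of $v_3$ contributes potential $w_3 = 2$, so one full delete/insert cycle contributes $4$ units of potential. Using $\phi(S \to S') = \PROFIT(S \triangle S')$, the swap $\{v_1\} \to \{v_2\}$ costs $w_1 + w_2 = 2C-3$ and the swap $\{v_2\} \to \{v_1,v_3\}$ costs $w_1 + w_2 + w_3 = 2C-1$, giving $4C-4$ of forced migration per cycle. Running $N$ cycles after the initial insertions yields an amortized factor at least $N(4C-4)/((2C-1) + 4N)$, which tends to $C-1 = \Omega(1/\epsilon)$ as $N \to \infty$.

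The main obstacle is really only a bookkeeping one, namely picking $C$ so that both forcing ratios fall below $1-\epsilon$ simultaneously; taking $C = \lfloor 1/\epsilon - 1 \rfloor$ handles $\epsilon \in (0, 1/2)$. Beyond that the proof is a direct translation of the strict-dynamic \textsc{SubsetSum} argument, and the fact that the graph is fixed as a three-vertex path makes the instance trivially planar, a unit-disk graph, and an interval graph, so the same $\Omega(1/\epsilon)$ lower bound is inherited by the weighted variants of \textsc{MIS} on all of these restricted graph classes.
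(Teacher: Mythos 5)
Your construction is exactly the one the paper uses: a three-node path $v_1 - v_2 - v_3$ with weights $C-2$, $C-1$, $2$ for $C = \Theta(1/\epsilon)$, where the adversary adds all three nodes and then repeatedly removes and re-inserts $v_3$, forcing any $(1-\epsilon)$-competitive algorithm to alternate between $\{v_2\}$ and $\{v_1,v_3\}$ at cost $\Theta(C)$ per cycle against only $O(1)$ migration potential. The only difference is the inessential choice of constant ($\lfloor 1/\epsilon - 1\rfloor$ versus the paper's $2\lceil 1/(3\epsilon)\rceil$), and your write-up actually spells out the verification that the paper leaves implicit by reference to the \textsc{SubsetSum} argument.
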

\begin{proof}
Set $C := 2\lceil 1/(3\epsilon) \rceil$ and consider the set of nodes $V=\{v_1,v_2,v_3\}$ with weights $w_1 := C-2, w_2 := C-1,w_3:= 2$ and $E:= \{\{v_1,v_2\}\{v_2,v_3\}\}$. Then the online instance where we add all three nodes $v_1,v_2,v_3$ and then repeatedly remove and add $v_3$ is the required instance.
\end{proof}

We can further conclude that this construction works on any graph class that admits a path of length three and therefore solving this problem with a competitive rate of $1-\epsilon$ for $\epsilon <1$ requires a migration factor of $\Omega(1/\epsilon)$.

\subsection{Inapproximability of maximum independet set in unit disk graphs with area migration}

We consider the problem of finding maximum independet sets in unweighted unit disk graphs and show that the corresponding online problem cannot be solved with an approximation ratio $\alpha > 1/2$ under the hypothesis of a bounded migration factor \emph{if we assume that the migration potential in the time step $t-1 \to t$ is given by the difference in area} between consecutive graphs:
\begin{equation*}
\Delta(I_t) = \AREA(I_t) - \AREA(I_{t-1}).
\end{equation*}

\begin{theorem}[Lax static case]
The lax static case of \textsc{MaximumIndependentSet} in unit disk graphs cannot be solved by an algorithm with ratio $\alpha > 1/2$ when the migration potential is given by the area difference of consecutive graphs.
\end{theorem}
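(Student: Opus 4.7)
The plan is to adapt the earlier inapproximability proof for \textsc{Knapsack} with weight migration (Theorem~\ref{knapsack_static_case_by_weight}) to this geometric setting by exploiting a mismatch between migration cost and migration potential. The key leverage point is that a newly inserted unit disk $d^{*}$ contributes migration potential only through $\AREA(d^{*} \setminus \bigcup I_{t-1})$, which can be driven arbitrarily close to zero if $d^{*}$ is almost fully covered by already-present disks; yet adding $d^{*}$ to the solution still costs the area of a full disk. Any algorithm attaining ratio $\alpha > 1/2$ can therefore be trapped into an unavoidable swap against a vanishingly small potential.

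Concretely, I would construct a two-step lax instance. At $t=1$ present $I_1=\{v,u\}$: two unit disks centered at $(0,0)$ and $(2-\delta,0)$ respectively. Their center distance $2-\delta < 2$ forces them to intersect, so $\OPT(I_1)=1$. At $t=2$ add one further unit disk $d^{*}$ centered at $(2+\epsilon,0)$. The three centers being collinear gives $\|v\,d^{*}\|=2+\epsilon>2$ (disjoint) and $\|u\,d^{*}\|=\delta+\epsilon<2$ (intersecting), so the intersection graph of $I_2$ is the path $v$--$u$--$d^{*}$ and $\{v,d^{*}\}$ is the unique independent set of size two, hence $\OPT(I_2)=2$.

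The ratio hypothesis then pins the algorithm down: $|S_1|=1$ and $|S_2|\geq 2\alpha > 1$ combined with the uniqueness of the optimum forces $S_2=\{v,d^{*}\}$, so $d^{*}\in S_2\setminus S_1$ in every scenario and the migration cost is bounded below by the area $\pi$ of a single disk contribution to $S_1\triangle S_2$. For the potential, since $v$ is disjoint from $d^{*}$, one has $\Delta=\AREA(d^{*}\setminus u)$, and the standard two-disk overlap formula with center distance $h:=\delta+\epsilon$ yields the first-order expansion $\Delta = 2h + O(h^{3})$. Choosing $\delta+\epsilon$ small enough therefore makes the amortized migration factor $\gamma\geq \pi/\Delta$ exceed any prescribed bound $B$, contradicting the bounded-migration hypothesis.

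The main technical hurdle is the simultaneous geometric realizability of the three properties $d^{*}$ must satisfy: graph-disjoint from $v$ (so that $\OPT$ jumps from $1$ to $2$), graph-adjacent to $u$ (so that $\{v,d^{*}\}$ is the \emph{unique} size-two independent set, ruling out that the algorithm stays on $\{u\}$ without paying), and covered by $u$ with residual area of order $\delta+\epsilon$. The collinear three-point placement above resolves all three requirements in one picture. No case analysis over the algorithm's initial choice is required, since $|S_1 \triangle S_2| \geq 1$ contributes at least one full disk to the cost whether the algorithm picked $\{v\}$ or $\{u\}$ at $t=1$.
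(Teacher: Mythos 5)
Your overall strategy is the same as the paper's: make the migration potential a vanishing sliver of uncovered area while forcing a swap to a uniquely determined size-two independent set. However, there is a genuine gap in the last step. Under the paper's migration model the initial assignment of a newly arrived item costs no migration, and your disk $d^{*}$ is placed so that it conflicts with $u$ and is disjoint from $v$ \emph{regardless} of what the algorithm did at $t=1$. Since $v$ and $u$ play asymmetric roles in your construction, nothing prevents the algorithm from choosing $S_1=\{v\}$; in that case the forced solution $S_2=\{v,d^{*}\}$ gives $S_1\triangle S_2=\{d^{*}\}$, which consists solely of the newly arrived item, so the migration cost is $0$ rather than $\pi$. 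Your closing claim that no case analysis over the initial choice is needed is precisely where the argument fails: it implicitly charges the algorithm for accepting the new item, which the model explicitly does not do (and which is also why the paper's own construction is careful to force the eviction and insertion of two \emph{old} disks).

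The repair is the adversarial/symmetry step the paper uses: after observing which of the two mutually intersecting disks the algorithm holds, place the third disk so that it heavily overlaps the \emph{held} disk and is disjoint from the other one. Then the unique size-two independent set consists of the two disks the algorithm does not currently have in its solution, so any $\alpha>1/2$ algorithm must evict its current disk and insert the other old disk, paying $2\pi$ of genuine migration against potential $\delta$. With that adaptive (or WLOG-by-symmetry) placement, your quantitative details --- the uniqueness of the optimum, the forcing from $2\alpha>1$, and the expansion $\Delta = 2h + O(h^{3})$ of the uncovered area --- all go through and give the unbounded migration factor.
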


\begin{proof}
We construct an online instance that needs an unbounded migration factor in order to be solved. The first graph consists of two unit disks $D_1$ and $D_2$ with $D_1$ tangent to $D_2$ on the left. Note that the disks overlap. Thus, they can never be part of the same solution. Due to symmetry, we can assume without loss of generality that any algorithm with a ratio $\alpha > 1/2$ will choose the solution $\{ D_1 \}$. Now, an adversary could add the unit disk $D_3$ slightly to the left of $D_1$ in such a way that the area difference (the migration potential) is exactly equal to $\delta > 0$. The situation is summarized in the following picture:
\begin{center}
\begin{tikzpicture}
\draw (2,2) circle (1cm);
\fill[red] (-0.2,2) circle (1cm);
\fill[white] (0,2) circle (1cm);
\draw (0,2) circle (1cm);
\draw[dashed] (-0.2,2) circle (1cm);
\end{tikzpicture}
\end{center}
Any approximation algorithm with a ratio $\alpha > 1/2$ will have to choose the solution $\{ D_1,D_3 \}$, which generates the migration factor $\gamma = 2 \pi / \delta$. This is unbounded for $\delta \to 0$.
\end{proof}

\begin{theorem}[Strict dynamic case]
The strict dynamic case of \textsc{MaximumIndependentSet} in unit disk graphs cannot be solved by an algorithm with ratio $\alpha > 1/2$ when the migration potential is given by the area difference of consecutive graphs.
\end{theorem}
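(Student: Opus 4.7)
The plan is to combine the geometric gadget from the lax static proof with an amortisation argument in the spirit of the strict dynamic \textsc{SubsetSum} proof. I would place four unit disks in a horizontal row: $D_1$ at the origin, $D_2$ slightly overlapping $D_1$ on the right with centres at distance $2-\eta$, $D_3^L$ slightly to the left of $D_1$ at a shift tuned so that the area of $D_3^L$ outside $D_1$ equals a tiny parameter $\delta$, and symmetrically $D_3^R$ slightly to the right of $D_2$ with area outside $D_2$ also $\delta$. Choosing $\eta$ strictly smaller than the shifts, the only adjacencies in the unit-disk graph are $\{D_1, D_2\}$, $\{D_1, D_3^L\}$, and $\{D_2, D_3^R\}$, so the maximum independent sets of size two are exactly $A := \{D_2, D_3^L\}$, $B := \{D_1, D_3^R\}$ and $C := \{D_3^L, D_3^R\}$. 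Any algorithm with ratio $\alpha > 1/2$ must maintain one of these three states whenever all four disks are present.

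Starting from the empty instance (strict case), the adversary inserts $D_1, D_2, D_3^L, D_3^R$ in four steps, contributing at most $2\pi + 2\delta$ of migration potential and absorbing some $O(\pi)$ of the algorithm's total migration. It then performs, for $N$ rounds, the cycle remove $D_3^L$, re-add $D_3^L$, remove $D_3^R$, re-add $D_3^R$, each step contributing exactly $\delta$ of migration potential, i.e.\ $4\delta$ per round. The key observation is that right after removing $D_3^L$ the reduced instance $\{D_1, D_2, D_3^R\}$ admits only the size-two IS $B$, and symmetrically right after removing $D_3^R$ it admits only $A$. Hence the algorithm is forced into state $B$ at one checkpoint and into state $A$ at the other in every round; in between these checkpoints it may sit at any state $Y_2, Y_4 \in \{A, B, C\}$.

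I would then carry out a $3 \times 3$ case analysis over steady-state choices $(Y_2, Y_4)$ with $Y_0 = Y_4$, using that a newly arriving item may join the next solution for free while every other change of solution costs $\pi$ in area. A short enumeration shows that the optimal strategy is the sticky choice $Y_0 = Y_2 = Y_4 = C$, which still costs $\pi$ for each of the four events in a round (each remove triggers the reintroduction of the appropriate disk from $\{D_1, D_2\}$, and each re-add triggers the eviction of that same disk), totalling $4\pi$ per round. Combined with the startup, this yields the lower bound
\[
\gamma \;\ge\; \frac{4 N \pi - O(\pi)}{2\pi + (4N+2)\delta} \;\xrightarrow[N \to \infty]{}\; \frac{\pi}{\delta}.
\]
For any purported upper bound $B$ on the migration factor, the adversary picks $\delta < \pi / B$ together with sufficiently large $N$, contradicting boundedness. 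The main obstacle will be the case analysis verifying that no strategy does better than $4\pi$ per round, where one must carefully bookkeep the free initial-assignment contribution of re-arriving items; the geometric tuning (choosing $\eta$ so that the forbidden adjacencies are infeasible while the area increments remain $\delta$) and the $N \to \infty$ amortisation are then routine adaptations of the lax static and strict dynamic \textsc{SubsetSum} arguments, respectively.
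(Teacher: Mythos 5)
Your construction is correct, but it is genuinely different from the paper's. The paper uses an \emph{adaptive} adversary with a growing cast of disks: it keeps only the two overlapping disks $D_1,D_2$ permanently, watches which singleton the algorithm retains after each deletion, and then inserts a fresh $\delta$-area disk on that side to force the swap, arguing informally that a migration-minimizing algorithm keeps the ``wrong'' disk each time. Your version replaces this with a fixed, \emph{oblivious} four-disk gadget forming the path $D_3^L - D_1 - D_2 - D_3^R$, whose three maximum independent sets are $A=\{D_2,D_3^L\}$, $B=\{D_1,D_3^R\}$, $C=\{D_3^L,D_3^R\}$, and a fixed remove/re-add schedule for the flanking disks that pins the algorithm to $B$ and to $A$ at alternating checkpoints no matter what it does in between. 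This buys a cleaner argument: the adversary need not react to the algorithm, and the hand-wavy ``to minimize migration the algorithm will choose $D_2$'' step of the paper disappears, at the price of one extra disk and the bookkeeping over the intermediate states. One simplification you should make: the $3\times 3$ case analysis you flag as the main obstacle is unnecessary, because you do not need the optimal per-round cost of $4\pi$ --- any constant suffices. Since the solution equals $B\ni D_1$ after each step 1 and equals $A\not\ni D_1$ after each step 3, and $D_1$ is never a newly arriving item after its single insertion, $D_1$ alone enters and leaves the solution once per round, giving a per-round migration cost of at least $2\pi$ against a potential of $4\delta$; the bound $\gamma \geq \bigl(2\pi N - O(\pi)\bigr)/\bigl(2\pi + (4N+2)\delta\bigr) \to \pi/(2\delta)$ then finishes the proof exactly as in your last step. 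The geometric tuning ($\eta$ strictly smaller than the flanking shifts so that $D_3^L$ avoids $D_2$ and $D_3^R$ avoids $D_1$, while each flanking disk contributes union-area exactly $\delta$) is sound.
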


\begin{proof}
In analogy to the lax static case, we construct an instance of the strict dynamic case that generates an unbounded migration factor. The instance is constructed as follows: Add a unit disk $D_1$. Any approximation algorithm with ratio $\alpha > 1/2$ will choose $\{ D_1 \}$ as its solution. Then, we add $D_2$ tangent to $D_1$ on the right:
\begin{center}
\begin{tikzpicture}
\draw (0,2) circle (1cm);
\draw (2,2) circle (1cm);
\end{tikzpicture}
\end{center}
To minimize migration, an online algorithm will choose $\{ D_1 \}$. Again, the adversary could add a unit disk $D_3$ slightly to the left of $D_1$ in such a way that the area difference is exactly equal to $\delta > 0$. Now, any algorithm with a ratio $\alpha > 1/2$ has to choose $\{ D_2,D_3 \}$ as the solution, such that both $D_1$ and $D_2$ have to migrate. The adversary then removes $D_3$ from the instance, generating a migration potential of $\delta$ and the algorithm is again forced to choose either $D_1$ or $D_2$. To minimize migration, the algorithm would have to choose $D_2$. In this case, the adversary adds $D_4$ slightly to the right of $D_2$ and we would have the same analysis as before. When we play this game for a total of $N$ times, this gives the migration factor
\begin{equation*}
\gamma = \frac{2 \pi N}{2 \pi + 2 \delta N}.
\end{equation*}
With $\delta = 1/N$ and $N \to \infty$, this blows up!
\end{proof}

\section{Framework with complementing Online Algorithm}\label{GenFW}

As we have seen so far, we can solve a range of classical and famous problems in their online versions with small migration by only using an online algorithm. While we have no meaningful result or applications for it, we now want to discuss how to use two algorithms in an alternating fashion and how well the resulting combination performs in terms of migration and competitive rate. The general idea resembles the framework of Berndt \emph{et.  al.}. \cite{DBLP:conf/waoa/BerndtDGJK19} and generalizes the framework we have used so far. Instead of simply waiting between the applications of the offline algorithm in our framework, we now want to bridge the time by applying some known online algorithm. In this way we hope to achieve and use the best results of both worlds: the flexibility of online algorithms and the high solution quality from the offline world.To make this idea work however, we cannot consider any combination of algorithms and we need to ensure that our online algorithm is able to work with the solutions of the offline algorithm. We will call such algorithms \textit{flexible}.

%Finally we want to discuss under what conditions the original framework from Berndt \emph{et.  al.}. \cite{DBLP:conf/waoa/BerndtDGJK19} 

%In order to solve an online problem with migration, one has multiple factors to
%consider.
%We want to optimize the quality of our solutions via the competitive
%ratio but also want to bound the amount of repacking.
%As we are allowed to save up on the migration, we need to determine the right
%moment to perform this repacking.
%Waiting too long might deteriorate the quality of the solution, while repacking
%too often increases the migration unnecessarily. 

%The approach we want to tackle these problems with is the same Berndt \emph{et.
%  al.}. \cite{DBLP:conf/waoa/BerndtDGJK19} used for minimization problems: We
%combine an existing offline algorithm with approximation ratio $\alpha$ and a
%suitable online algorithm (compatible with the offline algorithm) to obtain
%competitive ratio $\alpha+\epsilon$ with migration $O(1/\epsilon)$. 
%We  develop a similar framework for maximization problems and identify
%necessary criteria for it to work. 

\begin{definition}
Let $I \in \Pi_{\textrm{on}}$ be an instance of the online problem and an online algorithm $A_{\textrm{on}}$ for this problem. Let $t<t' \le |I|$ be two points of time and $S_t$ be a solution for $I_t$ not necessarily generated by $A_{\textrm{on}}$. We say an online algorithm is flexible, if it also accepts $S_t$ as a parameter and extends the solution $S_t$ to a solution $S_{t'}$ for $I_{t'}$ by reacting to the events happening in the time interval $t \to \cdots \to t'$. We further say $A_{\textrm{on}}$ has a maintaining ratio of $\beta$ from $t$ to $t'$, when $A_{\textrm{off}}(I_t) = \PROFIT(S_t) \ge \alpha \OPT(I_t)$ implies that $A_{\textrm{on}}(I_{t'},S_t) = \PROFIT(S_{t'}) \ge \alpha \beta \OPT(I_{t'})$.
\end{definition}

When we combine two algorithms, both delivering approximate solutions, it is inevitable that the solution quality will deteriorate based on both algorithms. We want to try and achieve a final ratio of $\alpha \cdot \beta$ where $\alpha$ is the best known offline approximation ratio and $\beta = 1-O(\epsilon)$ in order to achieve a similar competitive ratio to the offline result, except a small error of $O(\epsilon)$. We acknowledge therefore that our online algorithm may not uphold this ratio permanently but over some time frame. In this time frame, up to the earliest point of time where the online algorithm would break this desired ratio, we have to exchange our solution for a new better one. On the other side, we also want to achieve a certain migration factor. Therefore, we need to wait long enough until there is a time where the migration costs of exchanging our solution and the migration potential of newly arrived objects or information balances each other out. If such a point of time exists in the time frame where we maintain our desired solution quality, we call this time point a \textit{repacking time}. 

\begin{definition}
Let $\beta,\gamma\in \mathbb{Q}_{> 0}$ and let $A_{\textrm{on}}$ be a flexible online algorithm for $\Pi_{\textrm{on}}$. Let $t$ be any point of time, $S_t$ some solution for $I_t$. Let $t'$ be the first point of time, where $A_{\textrm{on}}$ is not able to keep the maintaining ratio $\beta$. If we then have for some $t< t^{\star} \le t'$ that for some other $\beta$-competitive solution $S'_{t^{\star}}$ that $\frac{\phi(S_{t^{\star}-1} \to S'_{t^{\star}})}{\Delta_{t:t^{\star}}} \le \gamma$, we call $t^{\star}$ a $(\beta,\gamma)$-repacking time and say that $A_{\textrm{on}}$ starting with solution $S_t$ admits a repacking time with maintaining ratio of $\beta$ and  migration  $\gamma$ for~$S'_{t^{\star}}$.
\end{definition}

This definition is quite powerful, and we do not require these properties for any arbitrary solutions. For our framework it is important that our chosen online and offline algorithms can work cooperatively. We require that the online algorithm starting with an offline solution maintains the desired competitive ratio until the offline algorithm computes another solution that we can afford migrating to. In that sense, we introduce the term of \textit{compatibility}. If an online algorithm working on a solution of an offline algorithm always admits a $(\beta,\gamma)$-repacking time for some future solution of the offline algorithm, we then call both algorithms \textit{compatible}.

\begin{definition}
Let $A_{\textrm{on}}$ and $A_{\textrm{off}}$ be an online and an offline algorithm for $\Pi$. We say $A_{\textrm{on}}$ and $A_{\textrm{off}}$ are compatible with maintaining ratio $\beta$ and migration factor $\gamma$ if $A_{\textrm{on}}$ starting with some solution $S$ from $A_{\textrm{off}}$ admits a $(\beta,\gamma)$-repacking time for some future solution $S'$ of $A_{\textrm{off}}$.
\end{definition}

We will often mention the time frame from one repacking time until the next and we will regard such a time window as \emph{phase}. During a phase, we will use the online algorithm and handle the changes of the instance until a repacking time occurs. When this happens, we basically want to switch to the solution of the offline algorithm. It may happen that we need to exchange parts or maybe even the complete old solution. The combination of two compatible algorithms achieves a competitive rate dependent on both the approximation ratio of the offline algorithm and the maintaining ratio.

\begin{theorem}
Let $A_{\textrm{off}}$ be an offline algorithm with an approximation ratio of $\alpha$ and $A_{\textrm{on}}$ be an online algorithm compatible with $A_{\textrm{off}}$ maintaining ratio $\beta$ and migration factor $\gamma$. Then the combined algorithm $F$ is an online algorithm with a competitive rate of $\alpha \cdot \beta$ and migration factor $\gamma$.

\end{theorem}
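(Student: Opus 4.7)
The plan is to prove both guarantees by a phase-wise argument: partition the timeline into phases delimited by the repacking times produced by the algorithm, and verify the two claims separately within each phase before combining them.

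First I would set up the phase structure. The first phase starts at time $0$ with $F$ invoking $A_{\textrm{off}}$ to obtain an initial solution $S_0$ with $\PROFIT(S_0)\ge\alpha\cdot\OPT(I_0)$. Inductively, suppose a phase begins at time $t_0$ with solution $S_{t_0}$ coming from $A_{\textrm{off}}$, hence $\PROFIT(S_{t_0})\ge\alpha\cdot\OPT(I_{t_0})$. Compatibility of $A_{\textrm{on}}$ and $A_{\textrm{off}}$ guarantees a $(\beta,\gamma)$-repacking time $t^{\star}$ at or before the first moment $t'$ where $A_{\textrm{on}}$ would fail to maintain ratio $\beta$ on $S_{t_0}$, together with a future solution $S'_{t^{\star}}$ of $A_{\textrm{off}}$ to migrate to. The combined algorithm $F$ runs $A_{\textrm{on}}$ from $S_{t_0}$ up to time $t^{\star}-1$, then migrates to $S'_{t^{\star}}$ and starts a new phase with this new solution.

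For the competitive ratio, fix any time $t$ and let $t_0\le t$ be the start of its phase. By definition of the phase, $t<t'$, so $A_{\textrm{on}}$ still maintains the ratio $\beta$ on $S_{t_0}$ at time $t$. Combined with the offline guarantee $\PROFIT(S_{t_0})\ge\alpha\cdot\OPT(I_{t_0})$, flexibility of $A_{\textrm{on}}$ yields
\[
\PROFIT(S_t)\ge \alpha\cdot\beta\cdot\OPT(I_t),
\]
which is exactly the desired competitive ratio $\alpha\cdot\beta$.

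For the migration factor, I would analyse each phase $t_0\to\cdots\to t^{\star}$ separately and then sum. The only migration accounted for by the construction is the repacking at time $t^{\star}$, whose cost is
\[
\phi(S_{t^{\star}-1}\to S'_{t^{\star}})\le \gamma\cdot\Delta_{t_0:t^{\star}}
\]
directly by the definition of a $(\beta,\gamma)$-repacking time. Summing over all phases, the accumulated migration cost is at most $\gamma\cdot\Delta_{0:t}$ at any time $t$, giving amortized migration factor $\gamma$. The main obstacle, and the subtlety that must be carefully highlighted, is that the definition of a $(\beta,\gamma)$-repacking time only bounds the one-shot repacking cost and implicitly assumes that $A_{\textrm{on}}$ itself does not accumulate additional migration during the phase beyond the initial item assignments; the proof therefore relies on interpreting $A_{\textrm{on}}$ as purely extending the carried-over offline solution (so that within-phase migration is zero) and on verifying that such a $t^{\star}$ always exists, which is precisely what compatibility promises.
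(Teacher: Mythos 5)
Your proof follows essentially the same route as the paper's: the same phase decomposition delimited by repacking times, the maintaining-ratio argument giving $\alpha\cdot\beta$ at every non-repacking time and $\alpha$ at repacking times, and the per-phase bound $\phi(S_{t^{\star}-1}\to S'_{t^{\star}})\le\gamma\cdot\Delta_{t_0:t^{\star}}$ summed over all phases to obtain amortized migration factor $\gamma$. Your closing observation that within-phase migration of $A_{\textrm{on}}$ must be zero (or otherwise charged) is an assumption the paper's proof also makes implicitly, so it does not change the argument.
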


\begin{proof}
Note that at a $(\beta,\gamma)$-repacking time $t^{\star}$, we obviously have, due to the approximation ratio of $A_{\textrm{off}}$, that $F(I_{t^{\star}}) = A_{\textrm{off}}(I_{t^{\star}})\ge \alpha \OPT(I_{t^{\star}})$. Similarly, we have for any non-repacking time $t$, whose last previous repacking time is $t^{\star}$, that $F(I_{t}) = A_{\textrm{on}}(I_{t}, S_{t^{\star}})\ge \alpha \beta \OPT(I_{t})$ due to the definition of $A_{\textrm{on}}$. For the migration factor, we consider each phase from one repacking time $t^{\star}_1$ to $t^{\star}_2$ and w.l.o.g. consider $t=1$ to be the first repacking time as we can simply start with an offline solution. By definition of compatibility with maintaining ratio and migration factor, we know that $\frac{\phi(S_{t^{\star}_2-1} \to S_{t^{\star}_2})}{\Delta_{t^{\star}_1:t^{\star}_2}} \le \gamma$. For the final migration factor, consider any point of time $t$ and say that $t$ lies after $k$ repacking times. Denote with $t^{\star}_i$ those repacking times and with  $S_{t^{\star}_{i}-1}$ the respective repacked solutions and with $S_{t^{\star}_{i}}$ the solutions after each repacking for $1\le i \le k$, where $S_{0}$ is the empty solution. We then have that our migration factor is bounded by
$ \frac{\sum_{i=2}^{k}{\phi(S_{t^{\star}_{i}-1} \to S_{t^{\star}_{i}})}}{\Delta_{0:t}}  
\le \frac{\sum_{i=2}^{k}{\phi(S_{t^{\star}_{i}-1} \to S_{t^{\star}_{i}})}}{\sum_{i=1}^{k}{\Delta_{t^{\star}_{i-1}:t^{\star}_{i}}}} \le \gamma$.
Note that we start summing up migration costs at $i =2$ since at the first repacking time we simply start with an offline solution and hence have no migration costs.
\end{proof}

Overall we end up with a very simple framework. All we need is two algorithms that we apply in an alternating fashion as long as we are able to repack our solutions and balance migration potential and migration costs. 

%\begin{lstlisting}[caption={Framework $ALG(A_{\textrm{on}}, A_{\textrm{off}})$},backgroundcolor = \color{lightgray},label=list:8-6,captionpos=t,float,abovecaptionskip=-\medskipamount, escapeinside={(*}{*)},basicstyle=\ttfamily\linespread{1.15}\footnotesize]
%let (*$S$*) be an empty solution;
%for each time (*$t$*) where new information arrives or disappears
%  if (*$t$*) is a repacking time do
%    treat (*$S_t$*) as offline instance and set (*$S:= A_{\textrm{off}}(I_t)$*);
%  else
%    change solution (*$S$*) according to (*$A_{\textrm{on}}$*);
%  endif
%\end{lstlisting}

% KG: Die beiden könnte man auch als subsection zu den entsprechenden upper bounds section packen
% LJ: Habe ich jetzt gemacht
%%%%%%%%%%%%%%%%%%%%%%%%%%%%%%%%%%%%%%%%%%%%%%%%%%%%%%%%%%%%%%%%%%%%%%%%
%\section{Upper Bounds for Non-Choosing Problems}
%%%%%%%%%%%%%%%%%%%%%%%%%%%%%%%%%%%%%%%%%%%%%%%%%%%%%%%%%%%%%%%%%%%%%%%%

%%%%%%%%%%%%%%%%%%%%%%%%%%%%%%%%%%%%%%%%%%%%%%%%%%%%%%%%%%%%%%%%%%%%%%%%
%\section{Lower Bounds for Non-Choosing Problems}
%%%%%%%%%%%%%%%%%%%%%%%%%%%%%%%%%%%%%%%%%%%%%%%%%%%%%%%%%%%%%%%%%%%%%%%%

\end{document}